\newif\ifapdx
\crefname{algorithm}{Algorithm}{Algorithms}
\def\@spthm#1#2#3#4{\topsep 7\p@ \@plus2\p@ \@minus4\p@
\refstepcounter[#2]{#1}%
\@ifnextchar[{\@spythm{#1}{#2}{#3}{#4}}{\@spxthm{#1}{#2}{#3}{#4}}}
\spnewtheorem*{claim*}{Claim}{\itshape}{\rmfamily}
\spnewtheorem*{lemma*}{Lemma}{\upshape\bfseries}{\itshape}
\spnewtheorem*{theorem*}{Theorem}{\upshape\bfseries}{\itshape}
\spnewtheorem*{proposition*}{Proposition}{\upshape\bfseries}{\itshape}
\spnewtheorem*{corollary*}{Corollary}{\upshape\bfseries}{\itshape}
\crefname{Definition}{Definition}{Definitions}
\crefname{Proposition}{Proposition}{Propositions}
\crefname{Lemma}{Lemma}{Lemmas}
\crefname{Theorem}{Theorem}{Theorems}
\crefname{Corollary}{Corollary}{Corollaries}
\crefname{Claim}{Claim}{Claims}
\crefname{Example}{Example}{Examples}
\crefname{Remark}{Remark}{Remarks}
\crefname{Exercise}{Exercise}{Exercises}
\crefname{Problem}{Problem}{Problems}
\crefname{Note}{Note}{Notes}
\crefname{Question}{Question}{Questions}
\crefname{Solution}{Solution}{Solutions}
\crefname{section}{Sect.}{Sects.}
\Crefname{section}{Section}{Sections}
\crefname{subsection}{Sect.}{Sects.}
\Crefname{section}{Section}{Sections}
\crefname{figure}{Fig.}{Figs.}
\Crefname{figure}{Figure}{Figures}
\let\phi\varphi
\let\epsilon\varepsilon
\let\theta\vartheta
\let\assign\leftarrow
\newcommand{\abs}[1]{\lvert #1\rvert}
\newcommand{\size}[1]{\lVert #1\rVert}
\newcommand{\dotminus}{\mathbin{\dot{\smash{-}\vphantom{a}}}}
\newcommand{\bbN}{\ensuremath{\mathbb{N}}}
\newcommand{\bbR}{\ensuremath{\mathbb{R}}}
\newcommand{\calM}{\ensuremath{\mathcal{M}}}
\DeclareMathOperator{\Prob}{Pr}
\DeclareMathOperator*{\opt}{opt}
\DeclareMathOperator{\val}{val}
\DeclareMathOperator{\Oh}{O}
\newclass{\ExpTime}{EXPTIME}
\newclass{\TwoExpTime}{2EXPTIME}
\let\P\undefined
\newcommand{\CTL}{\ensuremath{\mathrm{CTL}}\xspace}
\newcommand{\PCTL}{\ensuremath{\mathrm{PCTL}}\xspace}
\newcommand{\PRCTL}{\ensuremath{\mathrm{PRCTL}}\xspace}
\newcommand{\True}{\ensuremath{\mathit{true}}}
\newcommand{\P}{\mathord{\mathsf{P}}}
\newcommand{\AllP}{\mathord{\forall\mathsf{P}}}
\newcommand{\ExP}{\mathord{\exists\mathsf{P}}}
\newcommand{\Until}{\mathbin{\mathsf{U}}}
\newcommand{\Release}{\mathbin{\mathsf{R}}}
\newcommand{\Finally}{\mathord{\mathsf{F}}}
\newcommand{\Globally}{\mathord{\mathsf{G}}}
\newcommand{\Next}{\mathord{\mathsf{X}}}
\newcommand{\e}{\mathrm{e}}
\let\Labels\lambda
\newcommand{\reward}{\mathit{rew}}
\newcommand{\States}{S}
\newcommand{\Actions}{\mathit{Act}}
\newcommand{\ie}{i.e.\@\xspace}
\newcommand{\eg}{e.g.\@\xspace}
\newcommand{\wrt}{wrt.\@\xspace}
\newcommand{\ea}{et al.\@\xspace}
\tikzset{every picture/.style={>=stealth,bend angle=20,label distance=1pt}}
\tikzset{every label/.style={font=\small}}
\tikzset{every node/.style={inner sep=2pt,font=\small}}
\tikzset{state/.style={circle split,draw,minimum size=5ex,font=\footnotesize}}
\title{Computing Quantiles in\\
Markov Reward Models\thanks{This work was
supported by the DFG project QuaOS and
the collaborative research centre HAEC (SFB 912) funded by the DFG.
This work was partly supported by
the European Union Seventh Framework Programme under 
grant agreement no.~295261 (MEALS), the DFG/NWO project ROCKS
and the cluster of excellence cfAED.}}
\author{Michael Ummels\inst{1} \and Christel Baier\inst{2}}
\institute{Institute of Transportation Systems, German Aerospace Center\\
\email{michael.ummels@dlr.de} \and
Technische Universit\"at Dresden\\
\email{baier@tcs.inf.tu-dresden.de}}
\begin{document}
\maketitle
\setcounter{footnote}{0}

\begin{abstract}
Probabilistic model checking mainly concentrates on techniques
for reasoning about the probabilities of certain path properties
%%,possibly with constraints on the cumulative or
%% instaneous reward, 
or expected values of certain random variables.
%such as cumulative costs until a set of target states is reached.
%
For the quantitative system analysis, however, there is also another type 
of interesting performance measure, namely \emph{quantiles}.
A typical quantile query takes as input a lower probability bound 
$p \in {]0,1]}$ and a reachability property. 
The task is then to compute the minimal reward bound~$r$ 
such that with probability at least~$p$
the target set will be reached before the accumulated reward
exceeds~$r$.
Quantiles are well-known from mathematical statistics, but to the best
of our knowledge they have not been addressed by the model checking
community so far. 

In this paper, we %present algorithms and
study the complexity of quantile queries for until properties in
discrete-time finite-state Markov decision processes with 
nonnegative rewards on states.
We show that qualitative quantile queries can be evaluated
in polynomial time and 
present an exponential algorithm for the evaluation of
quantitative quantile queries.
For the special case of Markov chains, we show that
quantitative quantile queries can be evaluated in
pseudo-polynomial time.
\end{abstract}

\section{Introduction}

Markov models with reward (or cost) functions are widely used
for the quantitative system analysis.
We focus here on the discrete-time or time-abstract case. 
Discrete-time Markov decision processes, MDPs for short,
can be used, for instance, as an operational model for
randomised distributed algorithms and rewards might serve
to reason, \eg, about the size of the buffer of a communication
channel or
about the number of rounds that a leader election
protocol might take until a leader has been elected.

Several authors considered variants of probabilistic computation tree
logic (\PCTL) \cite{HanJon94,BdAlf95}
for specifying quantitative constraints on the behaviour
of Markov models with reward functions.
Such extensions, briefly called \PRCTL here, permit to specify constraints
on the probabilities of reward-bounded reachability conditions,
on the expected accumulated rewards until a certain set 
of target states is reached 
or expected instantaneous rewards after some fixed number of steps
\cite{AlfaroSTACS97,AlfaroThesis97,AlfaroCONCUR99,AndovaHK03,PekYounes05},
or on long-run averages \cite{AlfaroLICS98}.
An example for a typical \PRCTL formula with \PCTL's probability
operator and the reward-bounded until operator
is the formula
$\P_{>p}(a \Until_{\leq r} b)$ where
$p$~is a lower probability bound in $[0,1[$ and 
$r$~is an upper bound for the accumulated 
reward earned by path fragments that lead via states where $a$~holds to a
$b$-state. 
From a practical point of view, more important than 
checking whether a given \PRCTL formula~$\phi$ holds for
(the initial state of) a Markov model~$\calM$
are \PRCTL \emph{queries} of the form $\P_{=?}\,\psi$  
where the task is to calculate the (minimum or maximum)
probability for the path formula~$\psi$.
%
%If $\calM$~is a Markov decision process then
%the task is to study the worst-case and to compute the
%minimal probability for~$\psi$
%when ranging over all possible resolutions of the nondeterministic choices
%(schedulers).
Indeed, the standard \PRCTL model checking algorithm checks whether a given
formula $\P_{\bowtie\mkern 1mu p}\,\psi$ holds in~$\calM$ by 
evaluating the \PRCTL query $\P_{=?}\,\psi$ and 
comparing the computed value~$q$ with the given 
probability bound~$p$ according to the comparison
predicate~$\bowtie$. 
The standard procedure for dealing with \PRCTL formulas that refer to
expected (instantaneous or accumulated) rewards relies on an 
analogous scheme; see \eg \cite{FKNP11}.
An exception can be made for qualitative \PRCTL properties 
$\P_{\bowtie\mkern 1mu p}\,\psi$ where the probability bound~$p$
is either $0$ or~$1$, and the  path formula~$\psi$
is a plain until formula without reward bound (or any 
$\omega$-regular path property without reward constraints):
in this case, a graph analysis suffices to check whether
$\P_{\bowtie\mkern 1mu p}\,\psi$ holds for~$\calM$
\cite{VardiFOCS85,CouYanJACM95}.

In a common project with the operating system group of our department, 
we learned that a natural question for the systems community
is to swap the given and unknown parameters
in \PRCTL queries and to ask for the computation of a \emph{quantile}
(see \cite{BDEHKKMTV12}).
For instance, if $\calM$~models a mutual exclusion protocol for
competing processes $P_1,\ldots,P_n$
and rewards are used to represent the time spent by process~$P_i$ in its
waiting location, then the \emph{quantile query} 
$\P_{>0.9}(\mathit{wait}_i\Until_{\leq ?}\mathit{crit}_i)$ 
asks for the minimal time bound~$r$ such that 
in all scenarios (i.e., under all schedulers)
with probability greater than 0.9
process~$P_i$ will wait  no longer than $r$~time units 
before entering its critical section.
For another example, suppose $\calM$~models the management
system of a service execution platform. Then the query
$\P_{>0.98}(\True\Until_{\leq ?}\mathit{tasks\_completed})$ 
might ask for the minimal
initial energy budget~$r$ that is required to ensure that 
even in the worst-case
there is more than 98\% chance to 
reach a state where all tasks have been completed successfully.

To the best of our knowledge, quantile queries have not yet
been addressed directly in the model checking community.
What is known from the literature is that for finite
Markov chains with nonnegative rewards
the task of checking whether a \PRCTL formula
$\P_{>p}(a\Until_{\leq r} b)$ or
$\P_{\geq p}(a\Until_{\leq r} b)$
holds for some given state is \NP-hard \cite{LaroussinieS05} 
when $p$ and $r$~are represented in
binary.
%while the precise complexity is an open problem.
Since such a formula holds in state~$s$ if and only if
the value of the corresponding quantile query at~$s$
is $\leq r$,
this implies that evaluating quantile queries is also
\NP-hard.

The purpose of this paper is to study quantile queries for
Markov decision processes with nonnegative rewards in more details. 
We consider quantile queries for reward-bounded until formulas
in combination with the standard \PRCTL quantifier
$\P_{\bowtie\mkern 1mu p}$ (in this paper denoted by
$\AllP_{\bowtie\mkern 1mu p}$),
where universal quantification over all schedulers
is inherent in the semantics, and 
its dual $\ExP_{\bowtie\mkern 1mu p}$ that asks for the existence of some
scheduler enjoying a certain property. 
By duality, our results carry over to reward-bounded release properties.

\subsubsection{Contributions.}

First, we address \emph{qualitative} quantile queries,
\ie quantile queries where the probability bound is
either $0$ or~$1$, and we show that such queries can
be evaluated in strongly polynomial time. 
Our algorithm is surprisingly simple and does not
rely on value iteration or linear programming
techniques
(as it is e.g. the case for extremal expected reachability times
and stochastic shortest-paths problems in MDPs \cite{AlfaroCONCUR99}).
Instead, our algorithm relies on the greedy method and borrows ideas 
from D\ij{}kstra's shortest-path algorithm.
In particular, our algorithm can be used for checking PRCTL
formulas of the form $\AllP_{\bowtie\mkern 1mu p}(a\Until_{\leq r} b)$
or $\ExP_{\bowtie\mkern 1mu p}(a\Until_{\leq r} b)$
with $p\in\{0,1\}$ in polynomial time. Previously,
a polynomial-time algorithm was known only for the special case
of MDPs where every loop contains a state with nonzero reward
\cite{JurdzinskiSL08}.
%
\iffalse
\footnote{Queries of the form
          $\ExP_{>0}(a \Until_{\leq ?} b)$ ask for the
          smallest reward bound $r$ such that a path satisfying 
          $a \Until_{\leq r} b$ exists and can be handled
          directly using D\ij{}kstra's shortest-path algorithm.
          Our scheme operates similar, but there are differences in the
          treatment of states with reward 0.
          A similar observation applies to queries
          of the form
          $\P_{=1}(a \Until_{> ?} b)$ where D\ij{}kstra's algorithms 
          is also applicable.}
\fi
%
%Despite the simplicity of our schemes,
%the soundness proofs are not obvious 
%and there are some subtle difficulties in 
%the treatment of states with zero reward.

Second, we consider \emph{quantitative} quantile queries.
%under the assumption that all rewards are nonnegative integers.
The standard way to compute
the maximal or minimal probabilities for reward-bounded until
properties,  say $a \Until_{\leq r} b$,
relies on the iterative computation of the extremal probabilities
$a \Until_{\leq i} b$ for increasing reward bound~$i$.
We use here a reformulation of this computation scheme
as a linear program whose size is polynomial in the number of states 
of~$\calM$ and the given reward bound~$r$.
The crux to derive from this linear program an algorithm 
for the evaluation of quantile queries is to provide a bound for the 
sought value, which is our second contribution.
This bound then permits to perform a sequential
%or binary
search for the
quantile, which yields an exponentially time-bounded algorithm for
evaluating quantitative quantile queries.
Finally, in the special case of Markov chains with integer rewards,
we show that this algorithm can be improved to run in time polynomial
in the size of the query, the size of the chain, and the largest
reward, \ie in \emph{pseudo-polynomial} time. 

%%%%%%%%%%%%%%%%%%%%%%%%%%%%%%%%%%%%%%%%%%%%%%%%%%%%%%%%%%%%%%%%%%%%%%%%

\subsubsection*{Outline.}
The structure of the paper is as follows.
\Cref{section:prelim} summarises the relevant
concepts of Markov decision processes and briefly recalls
the logic \PRCTL.
Quantile queries are introduced in \cref{section:queries}.
Our polynomial-time algorithms for qualitative quantile queries
is presented in \cref{section:qualitative}, whereas
the quantitative case is addressed in 
\cref{section:quantitative}.
The paper ends with some concluding remarks in 
\cref{section:conclusion}.

\section{Preliminaries}
\label{section:prelim}

In the following, we assume a countably infinite
set \AP\ of \emph{atomic propositions}.
A Markov decision process (MDP)
$\calM=(\States,\Actions,\gamma,\Labels,\reward,\delta)$
with nonnegative rewards consists of a
finite set $\States$ of states,
a finite set $\Actions$ of actions,
a function
$\gamma\colon\States\to 2^\Actions\setminus\{\emptyset\}$
describing the set of \emph{enabled actions} in each state,
a labelling function
$\Labels\colon\States\to 2^\AP$, a reward function
$\reward\colon\States\to\bbR^{\geq 0}$, and
a transition function
$\delta\colon\States\times\Actions\times\States\to[0,1]$
such that $\sum_{t\in\States}\delta(s,\alpha,t)=1$
for all $s\in\States$ and $\alpha\in\Actions$.
If the set $\Actions$ of actions is just a singleton, we
call~$\calM$ a Markov chain.

Given an MDP~$\calM$, we say that a
state~$s$ of~$\calM$ is \emph{absorbing} if
$\delta(s,\alpha,s)=1$ for all $\alpha\in\gamma(s)$.
Moreover, for $a\in\AP$ we denote by $\Labels^{-1}(a)$ the set
of states~$s$ such that $a\in\Labels(s)$, and
for $x=s_0s_1\ldots s_k\in\States^*$ we denote by
$\reward(x)$ the accumulated reward after~$x$, \ie
$\reward(x)=\sum_{i=0}^	k\reward(s_i)$.
Finally, we denote by $\abs{\delta}$ the number of
\emph{nontrivial} transitions in $\calM$, \ie
$\abs{\delta}=\abs{\{(s,\alpha,t):
\text{$\alpha\in\gamma(s)$ and $\delta(s,\alpha,t)>0$}\}}$.

\emph{Schedulers} are used to resolve the nondeterminism that
arises from the possibility that more than one action might
be enabled in a given state. Formally, a scheduler for~$\calM$
is a mapping $\sigma\colon\States^+\to\Actions$ such that
$\sigma(xs)\in\gamma(s)$ for all $x\in\States^*$ and $s\in\States$.
Such a scheduler~$\sigma$ is \emph{memoryless} if
$\sigma(xs)=\sigma(s)$ for all $x\in\States^*$ and $s\in\States$.
Given a scheduler~$\sigma$ and an initial state $s=s_0$,
there is a unique probability measure~$\Prob_s^\sigma\!$
on the Borel $\sigma$-algebra over $\States^\omega$ such that
$\Prob_s^\sigma(s_0 s_1\ldots s_k\cdot S^\omega)
=\prod_{i=0}^{k-1}\delta(s_i,\sigma(s_0\ldots s_i),s_{i+1})$;
see \cite{BaierK08}.

Several logics have been introduced in order to reason
about the probability measures $\Prob_s^\sigma$.
In particular, the logics $\PCTL$ and $\PCTL^*$ replace
the path quantifiers of $\CTL$ and $\CTL^*$ by a single
probabilistic quantifier $\P_{\bowtie\mkern 1mu p}$,
where ${\bowtie}\in\{<,\leq,\geq,>\}$ and $p\in[0,1]$.
In these logics, the formula $\phi=\P_{\bowtie\mkern 1mu p}\,\psi$
holds in state~$s$ (written $s\models\phi$) if under
\emph{all} schedulers~$\sigma$
the probability $\Prob_s^\sigma(\psi)$ of the path property~$\psi$
compares positively with~$p$ \wrt the comparison operator~$\bowtie$,
\ie if $\Prob_s^\sigma(\psi)\bowtie\psi$.
A~dual existential quantifier
$\ExP_{\bowtie\mkern 1mu p}$ that asks for the existence of
a scheduler can be introduced using the equivalence
$\ExP_{\bowtie\mkern 1mu p}\,\psi\equiv
\neg\P_{\overline{\mathord{\bowtie\mkern-1mu}}\mkern 2mu p}
\,\psi$, where $\overline{{\bowtie}}$~denotes the dual
inequality. Since many properties of MDPs can be expressed
more naturally using the $\ExP$ quantifier, we consider this
quantifier an equal citizen of the logic, and we denote
the universal quantifier $\P$ by $\AllP$ in order
to stress its universal semantics.

In order to be
able to reason about accumulated rewards, we amend the
until operator $\Until$ by a reward constraint of the form
$\sim r$, where $\sim$~is a comparison operator and
$r\in\bbR\cup\{\pm\infty\}$.
Since we adopt the convention that a reward is earned
upon \emph{leaving} a state,
a path $\pi=s_0 s_1\ldots$ fulfils the
formula $\psi_1\Until_{\sim r}\psi_2$ if there exists a
point $k\in\bbN$ such that 1.\ $s_k s_{k+1}\ldots\models\psi_2$,
2.\ $s_i s_{i+1}\ldots\models\psi_1$ for all $i<k$, and
3.\ $\reward(s_0\ldots s_{k-1})\sim r$.
Even though our logic is only a subset of the logics
\PRCTL and $\PRCTL^*$ defined in \cite{AndovaHK03}, we use
the same names for the extension of \PCTL and $\PCTL^*$
with the amended until operator. The following proposition
states that extremal probabilities for $\PRCTL^*$ are
attainable. This follows, for instance, from the fact that
$\PRCTL^*$ can only describe $\omega$-regular path properties.

\begin{proposition}\label{prop:optimal-strat}
Let $\calM$ be an MDP and $\psi$ a $\PRCTL^*$ path formula.
Then there exist
schedulers $\sigma^*$ and~$\tau^*$ such that
$\Prob_s^{\sigma^*}\!(\psi)=\sup_\sigma\Prob_s^\sigma(\psi)$
and $\Prob_s^{\tau^*}\!(\psi)=\inf_\tau\Prob_s^\sigma(\psi)$
for all states~$s$ of~$\calM$.
\end{proposition}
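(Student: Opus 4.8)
The plan is to reduce the claim to a known fact about $\omega$-regular path properties, as the text hints. First I would recall the key structural fact behind the hint: for any $\PRCTL^*$ path formula $\psi$, the set of paths $\{\pi : \pi \models \psi\}$ is an $\omega$-regular subset of $\States^\omega$. This requires unwinding the syntax: the temporal operators of $\PCTL^*$ are exactly those of $\LTL$ (hence $\omega$-regular), and the only addition in $\PRCTL^*$ is the reward constraint $\sim r$ on the until operator. Since $r$ is a fixed rational (or $\pm\infty$), and rewards are earned on leaving states, the condition $\reward(s_0\ldots s_{k-1}) \sim r$ is a constraint on a finite prefix; more precisely, once we take the product of $\calM$ with a finite automaton that tracks the accumulated reward up to the threshold $r$ (capping at ``exceeds $r$''), the reward-bounded until becomes an ordinary until over the product. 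Thus $\psi$ defines an $\omega$-regular language, and equivalently a deterministic parity (or Rabin) automaton $\mathcal{A}_\psi$ recognising it.

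Next I would invoke the standard theory of MDPs with $\omega$-regular objectives. Forming the product MDP $\calM \otimes \mathcal{A}_\psi$, the event $\psi$ corresponds to the parity acceptance condition on the product. It is a classical result (see \cite{BaierK08}) that for MDPs with a parity objective there exist \emph{memoryless deterministic} schedulers on the product that simultaneously, for all states, attain the supremum of the acceptance probability; the same holds for the infimum, which is the supremum of the complementary parity condition (parity conditions are closed under complement). Projecting such an optimal product scheduler back to $\calM$ yields a (finite-memory) scheduler $\sigma^*$ on $\calM$ with $\Prob_s^{\sigma^*}(\psi) = \sup_\sigma \Prob_s^\sigma(\psi)$ for every state $s$, and dually a scheduler $\tau^*$ with $\Prob_s^{\tau^*}(\psi) = \inf_\tau \Prob_s^\tau(\psi)$ for every $s$. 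The ``for all states simultaneously'' part is automatic here because an optimal memoryless scheduler on the product is optimal from every product state at once, and each state $s$ of $\calM$ with the initial automaton state is one such product state.

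The main obstacle, and the only part that needs genuine care, is the translation of the reward-bounded until into an $\omega$-regular condition when $r$ can be a real number rather than an integer, and when rewards on states are arbitrary nonnegative reals. If rewards and $r$ are rational this is unproblematic: the relevant partial sums that can arise along a path either stay $\le r$ or exceed it, and only finitely many distinct ``reward states'' below the threshold are reachable in the sense needed, so the tracking automaton is finite. For genuinely irrational data one argues that only the linear order of the finitely many reachable partial prefix-sums relative to $r$ matters, so again a finite automaton suffices. Once this reduction is in place, the remainder is a direct appeal to the cited results on $\omega$-regular MDP objectives, so I would keep this step brief and reference \cite{BaierK08} for the memoryless-determinacy of parity MDPs.
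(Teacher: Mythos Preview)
Your proposal is correct and follows exactly the line the paper takes: the paper does not give a proof but only the one-sentence justification that ``$\PRCTL^*$ can only describe $\omega$-regular path properties,'' and you have spelled out precisely that reduction (tracking the bounded accumulated reward in a finite automaton, taking the product, and invoking the existence of optimal memoryless schedulers for parity objectives on the product). Your handling of real-valued rewards is also sound, since with finitely many nonnegative state rewards only finitely many distinct partial sums can lie below the fixed threshold~$r$; this is the same observation the paper uses later in the proof of \cref{prop:value}.
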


\section{Quantile queries}
\label{section:queries}

A \emph{quantile query} is of the form
$\phi=\AllP_{\bowtie\mkern 1mu p}(a\Until_{\leq ?} b)$
or
$\phi=\ExP_{\bowtie\mkern 1mu p}(a\Until_{\leq ?} b)$,
where $a,b\in\AP$, $p\in[0,1]$ and
${\bowtie}\in\{<,\leq,\geq,>\}$.
\iffalse
A \emph{quantile query} is of the form
$\phi=\AllP_{\bowtie\mkern 1mu p}(a\Until_{\sim ?} b)$
or
$\phi=\ExP_{\bowtie\mkern 1mu p}(a\Until_{\sim ?} b)$,
where $a,b\in\AP$, $p\in[0,1]$,
${\bowtie}\in\{<,\leq,\geq,>\}$, and ${\sim}\in\{\leq,>\}$.
\fi
We call queries of the former type \emph{universal} and
queries of the latter type \emph{existential}.
If $r\in\bbR\cup\{\pm\infty\}$,
we write $\phi[r]$ for the \PRCTL formula that is
obtained from~$\phi$ by replacing $?$ with~$r$.

Given an MDP~$\calM$ with rewards,
evaluating $\phi$ on~$\calM$ amounts to computing, for each
state~$s$ of~$\calM$,
the least or the largest $r\in\bbR$ such that $s\models\phi[r]$.
%Formally, if $\phi=\AllP_{\bowtie\mkern 1mu p}(a\Until_{\sim ?} b)$
%or $\phi=\ExP_{\bowtie\mkern 1mu p}(a\Until_{\sim ?} b)$
%then the \emph{value} of a state~$s$ of~$\calM$ with respect
%to~$\phi$ is
%$\val^\calM_\phi(s)\coloneqq\opt\{r\in\bbR:s\models\phi[r]\}$,
%where $\opt=\inf$ if either ${\bowtie}\in\{\geq,>\}$ and
%${\sim}={\leq}$ or ${\bowtie}\in\{<,\leq\}$ and ${\sim}={>}$,
Formally, if $\phi=\AllP_{\bowtie\mkern 1mu p}(a\Until_{\leq ?} b)$
or $\phi=\ExP_{\bowtie\mkern 1mu p}(a\Until_{\leq ?} b)$
then the \emph{value} of a state~$s$ of~$\calM$ with respect
to~$\phi$ is
$\val^\calM_\phi(s)\coloneqq\opt\{r\in\bbR:s\models\phi[r]\}$,
where $\opt=\inf$ if ${\bowtie}\in\{\geq,>\}$
and $\opt=\sup$ otherwise.\footnote{As usual, we assume that
$\inf\emptyset=\infty$ and $\sup\emptyset=-\infty$.}
Depending on whether $\val^\calM_\phi(s)$ is defined as an infimum or
a supremum, we call $\phi$ a \emph{minimising} or a
\emph{maximising} query, respectively.
In the following, we will omit the superscript~$\calM$ when
the underlying MDP is clear from the context.

Given a query~$\phi$, we define the \emph{dual query}
to be the unique quantile query~$\overline\phi$ such that
$\overline{\phi}[r]\equiv\neg\phi[r]$ for all $r\in\bbR\cup\{\pm\infty\}$.
Hence, to form the dual of a query, one only needs to replace
the quantifier $\AllP_{\bowtie\mkern 1mu p}$ by
$\ExP_{\overline{\mathord{\bowtie\mkern-1mu}}\mkern 2mu p}$ and
vice versa. For instance,
the dual of $\AllP_{<p}(a\Until_{\leq ?} b)$ is
$\ExP_{\geq p}(a\Until_{\leq ?} b)$.
Note that the dual of a universal or minimising query is an
existential or maximising query, respectively, and vice versa.

%The following observation allows us to restrict our attention
%to universal queries.

\begin{proposition}\label{prop:duality}
Let $\calM$ be an MDP and $\phi$ a quantile query.
Then $\val_\phi(s)=\val_{\overline{\phi}}(s)$ for all
states~$s$ of~$\calM$.
\end{proposition}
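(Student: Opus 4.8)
The plan is to reduce the statement to two elementary observations: that the value function is, by definition, an extremal value of a set of reals determined by a parametrised family of PRCTL formulas, and that the dual query negates every formula in that family, thereby turning the set into its set-theoretic complement. Concretely, fix a state $s$ and write $R_\phi=\{r\in\bbR:s\models\phi[r]\}$, so that $\val_\phi(s)=\opt R_\phi$ with $\opt\in\{\inf,\sup\}$ chosen according to $\bowtie$. By the definition of the dual query we have $\overline\phi[r]\equiv\neg\phi[r]$ for every $r\in\bbR\cup\{\pm\infty\}$, hence $R_{\overline\phi}=\bbR\setminus R_\phi$. Moreover, since taking the dual swaps $\AllP_{\bowtie p}$ with $\ExP_{\overline\bowtie p}$ and $\overline\bowtie$ flips the direction of the inequality, $\phi$ is minimising iff $\overline\phi$ is maximising; so $\val_\phi(s)=\inf R_\phi$ exactly when $\val_{\overline\phi}(s)=\sup R_{\overline\phi}=\sup(\bbR\setminus R_\phi)$, and symmetrically.

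The first key step is therefore to establish that $R_\phi$ is an \emph{upward- or downward-closed} subset of $\bbR$ (an interval, possibly empty or all of $\bbR$, unbounded on one side). This is where the monotonicity of the reward-bounded until operator enters: if $\pi\models a\Until_{\leq r}b$ and $r\leq r'$, then $\pi\models a\Until_{\leq r'}b$, because condition~3 in the semantics ($\reward(s_0\ldots s_{k-1})\sim r$) only becomes easier to satisfy as $r$ grows. Hence $\Prob_s^\sigma(a\Until_{\leq r}b)$ is nondecreasing in $r$ for every scheduler $\sigma$, and so is $\sup_\sigma$ and $\inf_\sigma$ of it; consequently the truth of $\AllP_{\bowtie p}(a\Until_{\leq r}b)$ is monotone in $r$ (nondecreasing if ${\bowtie}\in\{\geq,>\}$, nonincreasing if ${\bowtie}\in\{<,\leq\}$), and likewise for the $\ExP$ form. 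Thus $R_\phi$ is an up-set when $\phi$ is minimising and a down-set when $\phi$ is maximising.

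The second step is the purely order-theoretic fact: for a nonempty proper up-set $U\subseteq\bbR$ and its complementary down-set $D=\bbR\setminus U$, we have $\inf U=\sup D$, because this common value is precisely the cut point separating the two sets; the boundary case where $U$ or $D$ is empty or all of $\bbR$ is handled by the convention $\inf\emptyset=\infty$, $\sup\emptyset=-\infty$, which makes the identity $\inf U=\sup(\bbR\setminus U)$ hold in all cases. Applying this with $U=R_\phi$ (for $\phi$ minimising, so $\overline\phi$ maximising and $R_{\overline\phi}=\bbR\setminus R_\phi$ a down-set) gives $\val_\phi(s)=\inf R_\phi=\sup R_{\overline\phi}=\val_{\overline\phi}(s)$; the case $\phi$ maximising is symmetric, exchanging the roles of $\inf$ and $\sup$.

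I expect the main obstacle to be nothing deep but rather a careful bookkeeping of the degenerate cases — when there is no reward bound making $\phi$ true (so $R_\phi=\emptyset$ and $\val_\phi(s)=-\infty$ or $+\infty$ depending on whether $\phi$ minimises or maximises), or when every bound works — and checking that in each of these the convention on $\inf\emptyset$ and $\sup\emptyset$ makes both sides of the claimed equation agree. One should also double-check that $\overline\phi$ is genuinely well-defined, i.e.\ that $\neg\phi[r]$ is again of the quantile-query shape $\ExP/\AllP_{\bowtie p}(a\Until_{\leq ?}b)$ with the same $a,b,p$ and flipped quantifier and comparison; this is immediate from the equivalence $\ExP_{\bowtie p}\psi\equiv\neg\AllP_{\overline\bowtie p}\psi$ recalled in the preliminaries, but it is what licenses speaking of \emph{the} dual query at all.
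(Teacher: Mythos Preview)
Your proposal is correct and follows essentially the same approach as the paper: both arguments rest on the two ingredients that $R_{\overline\phi}=\bbR\setminus R_\phi$ (by definition of the dual) and that $R_\phi$ is monotone in~$r$ (an up-set when $\phi$ is minimising), from which $\inf R_\phi=\sup(\bbR\setminus R_\phi)$ follows. The paper carries this out directly by showing $v'\geq v$ and $v'\leq v$ without naming the sets, whereas you package the last step as a general order-theoretic fact about up-sets and their complements; the content is the same.
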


\begin{proof}
Without loss of generality, assume that $\phi$ is
a minimising query.
Let $s\in\States$, $v=\val_\phi(s)$ and $v'=\val_{\overline\phi}(s)$.
On the one hand, for all $r<v$ we have
$s\not\models\phi[r]$, \ie $s\models\overline\phi[r]$,
and therefore $v'\geq v$.
On the other hand,
since $\phi[r]$ implies $\phi[r']$ for $r'\geq r$,
for all $r>v$ we have
$s\models\phi[r]$, \ie
$s\not\models\overline\phi[r]$,
and therefore also $v'\leq v$.
\qed
\end{proof}

Assume that we have computed the value $\val_\phi(s)$ of a state~$s$
with respect to a quantile query~$\phi$.
Then, for any $r\in\bbR$, to decide whether $s\models\phi[r]$,
we just need to compare~$r$ to~$\val_\phi(s)$.

\begin{proposition}\label{prop:value}
Let $\calM$ be an MDP, $s$ a state of~$\calM$,
$\phi$ a minimising or maximising quantile query, and $r\in\bbR$.
Then $s\models\phi[r]$ if and only if $\val_\phi(s)\leq r$ or
$\val_\phi(s)>r$, respectively.
\end{proposition}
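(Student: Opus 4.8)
The plan is to argue directly from the definition of $\val_\phi(s)$ as an infimum or supremum, using the key monotonicity fact that $\phi[r]$ implies $\phi[r']$ whenever $r \leq r'$ (already used in the proof of \cref{prop:duality}). I would split into the two cases according to whether $\phi$ is minimising or maximising.

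First, suppose $\phi$ is minimising, so that $\val_\phi(s) = \inf\{r' \in \bbR : s \models \phi[r']\}$. For the "if" direction, assume $\val_\phi(s) \leq r$. Then $r$ is not a lower bound for the set $R = \{r' \in \bbR : s \models \phi[r']\}$ unless $r \in R$ already; more precisely, since $\val_\phi(s) \le r$, either $r$ itself lies in $R$ (done), or there is some $r' \in R$ with $r' \le r$, and then monotonicity gives $s \models \phi[r]$. One subtlety: if $\val_\phi(s) \le r$ but the infimum is not attained and every element of $R$ is strictly greater than $r$, this cannot happen, because then $r$ would be a lower bound of $R$ strictly above $\val_\phi(s)$, contradicting that $\val_\phi(s)$ is the greatest lower bound — so actually there must be $r' \in R$ with $r' \le r$ (or $r \in R$). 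Hence $s \models \phi[r]$. For the "only if" direction, assume $s \models \phi[r]$; then $r \in R$, so $\val_\phi(s) = \inf R \le r$. This completes the minimising case.

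Second, suppose $\phi$ is maximising, so $\val_\phi(s) = \sup R$ with $R$ as above. For "only if", if $s \models \phi[r]$ then $r \in R$, hence $\val_\phi(s) = \sup R \ge r$. The claim to be proved here is $\val_\phi(s) > r$, which requires ruling out equality. This is where I expect the only real subtlety: we must show that when $s \models \phi[r]$ the supremum is \emph{strictly} above $r$. The reason is that for a maximising query ${\bowtie} \in \{<,\le\}$, so $\phi[r']$ asks for $\Prob_s^\sigma(a \Until_{\le r'} b) \bowtie p$ under all (or some) schedulers; if $s \models \phi[r]$ then the relevant probability already compares below $p$ at reward bound $r$, and since the probability of $a\Until_{\le r'} b$ is a non-decreasing, right-continuous step function of $r'$ that only increases at finitely many reward values (the accumulated rewards of finite path fragments), there is a genuine gap to the right of $r$ before the next jump — so $s \models \phi[r']$ for some $r' > r$, giving $\val_\phi(s) > r$. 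For "if", assume $\val_\phi(s) > r$; then there exists $r' \in R$ with $r' > r$ (since $\sup R > r$), and monotonicity in the contrapositive form — $s \models \phi[r']$ and $r \le r'$... wait, monotonicity goes the wrong way for maximising queries, so instead I use that $\val_\phi(s) > r$ together with \cref{prop:duality} applied to the dual minimising query $\overline\phi$: $\val_{\overline\phi}(s) = \val_\phi(s) > r$ means $r < \val_{\overline\phi}(s)$, so $s \not\models \overline\phi[r]$ by definition of the infimum for the minimising query $\overline\phi$ together with the monotonicity of $\overline\phi$; hence $s \models \phi[r]$.

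A cleaner route, which I would probably adopt, is to reduce the maximising case entirely to the minimising case via \cref{prop:duality}: if $\phi$ is maximising then $\overline\phi$ is minimising, $\val_\phi(s) = \val_{\overline\phi}(s)$, and $s \models \phi[r] \iff s \not\models \overline\phi[r] \iff \val_{\overline\phi}(s) > r \iff \val_\phi(s) > r$, where the middle equivalence is the (already proved) minimising case read contrapositively — noting $\neg(\val_{\overline\phi}(s) \le r) \iff \val_{\overline\phi}(s) > r$. This way the only place where a limiting argument is needed is the "if" direction of the minimising case, and even there the argument is purely order-theoretic: $\val_\phi(s) \le r$ forces the existence of some $r' \le r$ (or $r' = r$) with $s \models \phi[r']$, because otherwise $r$ would be a lower bound of $R$ exceeding $\inf R$. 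The main obstacle is thus just making this last order-theoretic step rigorous and not accidentally assuming the infimum is attained; invoking the monotonicity of $\phi$ in $r$ is what saves it.
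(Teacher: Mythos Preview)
Your reduction of the maximising case to the minimising case via duality is exactly what the paper does, so that part is fine. The problem is in the ``if'' direction of the minimising case.

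Your purely order-theoretic argument breaks precisely at the boundary $\val_\phi(s)=r$. You write that if every element of $R=\{r':s\models\phi[r']\}$ were strictly greater than~$r$, then ``$r$ would be a lower bound of $R$ strictly above $\val_\phi(s)$''. But when $\val_\phi(s)=r$ and the infimum is not (yet known to be) attained, $r$ is a lower bound of~$R$ that \emph{equals} $\inf R$; there is no contradiction. Monotonicity of $r'\mapsto\phi[r']$ does not save you here: from $\val_\phi(s)=r$ you only get $s\models\phi[r+\epsilon]$ for every $\epsilon>0$, and monotonicity goes the wrong way to push this down to~$r$.

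What is actually needed is the discreteness observation you yourself mention (but only in your first pass at the maximising case): the set of possible accumulated rewards $\sum_i\reward(s_i)$ is locally finite, so there exists $\epsilon>0$ with no such reward value in the interval $(r,r+\epsilon]$. Consequently the path properties $a\Until_{\leq r}b$ and $a\Until_{\leq r+\epsilon}b$ coincide, and hence $\phi[r]\equiv\phi[r+\epsilon]$. Since $\val_\phi(s)\leq r<r+\epsilon$, there is some $r'<r+\epsilon$ with $s\models\phi[r']$, so by monotonicity $s\models\phi[r+\epsilon]$, and therefore $s\models\phi[r]$. This is exactly the paper's argument; the substance of the proof lives in this step, not in order theory.
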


\begin{proof}
First assume that
%$\phi=Q(a\Until_{\sim ?} b)$
$\phi=Q(a\Until_{\leq ?} b)$
is a minimizing query.
Clearly, if $s\models\phi[r]$, then $\val_\phi(s)\leq r$.
On the other hand, assume that $\val_\phi(s)\leq r$ and
denote by~$R$ the set of numbers~$x\in\bbR$ of the form
$x=\sum_{i=0}^k\reward(s_i)$ for a finite sequence
$s_0 s_1\ldots s_k$ of states.
Since the set $\{x\in R:x\leq n\}$ is finite for all $n\in\bbN$,
we can fix some $\epsilon>0$ such that
$r+\delta\notin R$ for all $0<\delta\leq\epsilon$.
Hence, the set of paths that fulfil
%$a\Until_{\sim r} b$
$a\Until_{\leq r} b$
agrees with the
set of paths that fulfil
%$a\Until_{\sim r+\epsilon} b$.
$a\Until_{\leq r+\epsilon} b$.
Since $\val_\phi(s)<r+\epsilon$ and $\phi$~is a minimising
query, we know that $s\models\phi[r+\epsilon]$.
Since replacing $r+\epsilon$ by~$r$ does not affect the path property,
this implies that $s\models\phi[r]$.
Finally, if $\phi$~is a maximising query, then $\overline{\phi}$ is
a minimising query, and $s\models\overline{\phi}[r]$ if and only if
$\val_{\overline{\phi}}(s)=\val_\phi(s)\leq r$, \ie
$s\models\phi[r]$ if and only if $\val_\phi(s)>r$.
\qed
\end{proof}

Proposition~\ref{prop:value} does not hold when we allow~$r$ to take
an infinite value. In fact, if $\phi$~is a minimizing query
and $s\not\models\phi[\infty]$, then $\val_\phi(s)=\infty$.
Analagously, if $\phi$~is a maximising query and $s\not\models\phi[-\infty]$,
then $\val_\phi(s)=-\infty$.

To conclude this section, let us remark that
queries using the reward-bounded \emph{release}
operator~$\Release$
can easily be accommodated in our framework.
For instance, the query
$\AllP_{\geq p}(a\Release_{\leq ?} b)$
is equivalent to the query
$\AllP_{\leq 1-p}(\neg a\Until_{\leq ?}\neg b)$.

\iffalse
\subsubsection{Extended queries}

\Cref{table:reduce} shows how quantile queries
using the $\Release$ operator can be reduced
to queries using $\Until$.
%
\begin{table}
\caption{\label{table:reduce}Reducing queries}
\begin{tabular}{@{}ll@{}}
\toprule
Query &  Reduced query \\
\midrule
$\AllP_{<p}(a\Release_{\sim ?} b)$ &
 $\AllP_{>1-p}(\neg a\Until_{\sim ?}\neg b)$ \\
$\AllP_{\leq p}(a\Release_{\sim ?} b)$ &
 $\AllP_{\geq 1-p}(\neg a\Until_{\sim ?}\neg b)$ \\
$\AllP_{\geq p}(a\Release_{\sim ?} b)$ &
 $\AllP_{\leq 1-p}(\neg a\Until_{\sim ?}\neg b)$ \\
$\AllP_{>p}(a\Release_{\sim ?} b)$ &
 $\AllP_{<1-p}(\neg a\Until_{\sim ?}\neg b)$ \\
\midrule
$\ExP_{<p}(a\Release_{\sim ?} b)$ &
 $\ExP_{>1-p}(\neg a\Until_{\sim ?}\neg b)$ \\
$\ExP_{\leq p}(a\Release_{\sim ?} b)$ &
 $\ExP_{\geq 1-p}(\neg a\Until_{\sim ?}\neg b)$ \\
$\ExP_{\geq p}(a\Release_{\sim ?} b)$ &
 $\ExP_{\leq 1-p}(\neg a\Until_{\sim ?}\neg b)$ \\
$\ExP_{>p}(a\Release_{\sim ?} b)$ &
 $\ExP_{<1-p}(\neg a\Until_{\sim ?}\neg b)$ \\
\bottomrule
\end{tabular}
\end{table}
\fi

\section{Evaluating qualitative queries}
\label{section:qualitative}

In this section, we give a strongly polynomial-time algorithm for evaluating
\emph{qualitative queries}, \ie queries where the probability bound~$p$
is either $0$ or~$1$.
Throughout this section, let
$\calM=(\States,\Actions,\gamma,\Labels,\reward,\delta)$
be an MDP with nonnegative rewards.
By \cref{prop:duality}, we can restrict to queries using one of
the quantifiers $\AllP_{>0}$, $\ExP_{>0}$, $\AllP_{=1}$ and $\ExP_{=1}$.
The following lemma allows to give a unified treatment of all
cases. ($\Next$ denotes the next-step operator).

\begin{lemma}\label{lemma:equivalences}
The equivalence
$Q\,\Next(a\Until(\neg a\wedge\psi))\equiv
Q\,\Next(a\Until(\neg a\wedge Q\,\psi))$
holds in $\PRCTL^*$ for all
$Q\in\{\AllP_{>0},\ExP_{>0},\AllP_{=1},\ExP_{=1}\}$,
$a\in\AP$, and all path formulas~$\psi$.
\end{lemma}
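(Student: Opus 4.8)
The plan is to prove the equivalence separately for the four quantifiers, grouping them into the two qualitative regimes: the "positive probability" quantifiers $\AllP_{>0}$ and $\ExP_{>0}$, and the "almost sure" quantifiers $\AllP_{=1}$ and $\ExP_{=1}$. In each case the nontrivial direction is that the left-hand side implies the right-hand side, since a path satisfying $a\Until(\neg a\wedge Q\,\psi)$ obviously satisfies $a\Until(\neg a\wedge\psi)$ — wait, that implication goes the wrong way, so in fact \emph{both} directions need an argument because replacing $\psi$ by $Q\,\psi$ changes the state formula at the witnessing position. The key observation is that the position $k$ at which the until is witnessed is the \emph{first} state after leaving the $a$-block, i.e. the first $\neg a$-state; call it $t$. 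So satisfaction of either side of the equivalence at a state $s$ reduces, via a one-step/prefix decomposition, to a statement about the $\neg a$-states reachable through $a$-states, and whether $\psi$ (resp.\ $Q\,\psi$) holds at those states.

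First I would set up this decomposition. Fix a state $s$ and a scheduler $\sigma$ (for the $\ExP$/$\AllP$ quantifiers respectively), and observe that the event $\Next(a\Until(\neg a\wedge\psi))$ is the disjoint union, over finite prefixes $sxt$ with $x\in(\Labels^{-1}(a))^*$ and $t\in S\setminus\Labels^{-1}(a)$, of the cylinder $\{sxt\}\cdot S^\omega$ intersected with the event that $\psi$ holds from $t$ onwards. This is exactly the computation carried out in the (currently commented-out) expansion lemma. Consequently
\[
\Prob_s^\sigma\bigl(\Next(a\Until(\neg a\wedge\psi))\bigr)
=\sum_{t\in S\setminus\Labels^{-1}(a)}\Prob_s^\sigma\bigl(\Next(a\Until\{t\})\bigr)\cdot\Prob_t^{\sigma[\cdot]}(\psi),
\]
where the conditional factor is the probability of $\psi$ under the residual scheduler. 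For the reachability part $\Next(a\Until\{t\})$, note that whether $Q\,\psi$ holds at $t$ is a \emph{fixed} $0/1$ quantity independent of the scheduler, so the right-hand side of the claimed equivalence is, under any $\sigma$,
\[
\Prob_s^\sigma\bigl(\Next(a\Until(\neg a\wedge Q\,\psi))\bigr)
=\sum_{t:\,t\models Q\psi}\Prob_s^\sigma\bigl(\Next(a\Until\{t\})\bigr).
\]

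With these two formulas in hand, the four cases are short. For $Q=\ExP_{>0}$: the left side holds at $s$ iff some $\sigma$ makes the first sum positive, iff some $\sigma$ assigns positive probability to some prefix $sxt$ with $\Prob_t^{\tau}(\psi)>0$ for a suitable residual $\tau$, iff there is a $\neg a$-reachable-through-$a$ state $t$ with $t\models\ExP_{>0}\psi$; that is exactly the right side. For $Q=\AllP_{=1}$: the left side says every $\sigma$ makes the first sum equal to $1$; using \cref{prop:optimal-strat} to pick a scheduler $\tau^*$ minimising $\Prob_t^\tau(\psi)$ simultaneously at all $t$, one sees the worst case forces each reachable $t$ to satisfy $\Prob_t^{\tau^*}(\psi)=1$, i.e.\ $t\models\AllP_{=1}\psi$, and also forces $a\Until\{\text{such }t\}$ to have probability $1$ under every scheduler — which is precisely $s\models\AllP_{=1}\Next(a\Until(\neg a\wedge\AllP_{=1}\psi))$; the converse direction substitutes the value $1$ back in. The remaining two cases ($\AllP_{>0}$, $\ExP_{=1}$) are handled dually, either directly by the same bookkeeping or by invoking \cref{prop:duality}-style duality between $\AllP_{>0}$ and $\ExP_{=1}$ after pushing the negation through $\neg a\wedge(\cdot)$.

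The main obstacle is the $\AllP_{=1}$ (equivalently $\ExP_{>0}$-for-the-complement) direction: one has to be careful that "every scheduler gives probability $1$ to reaching a $\neg a$-state and then satisfying $\psi$" genuinely decomposes as "every scheduler gives probability $1$ to reaching the set $T=\{t:t\models\AllP_{=1}\psi\}$ through $a$-states." This requires a strategy-splicing argument — given any scheduler, redirect it after the first $\neg a$-state to the simultaneously-$\psi$-worst-case scheduler $\tau^*$ of \cref{prop:optimal-strat} — to show that if some $t\notin T$ were reachable with positive probability, some scheduler would push $\Prob_s^\sigma(\cdot)$ strictly below $1$. This is exactly the kind of splicing done in the commented-out expansion lemma, and I would carry it out explicitly here rather than citing that lemma, since the statement as given is self-contained.
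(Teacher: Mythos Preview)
The paper does not actually prove \cref{lemma:equivalences}; it is stated and then invoked in the proof of \cref{lemma:qual-until-upper} without further justification. (The expansion identity you reference appears in the source only inside an \verb|\iffalse| block.) So there is no authorial argument to compare against, and the question is simply whether your plan is sound. It is, modulo two points of presentation.

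First, your displayed identity
\[
\Prob_s^\sigma\bigl(\Next(a\Until(\neg a\wedge\psi))\bigr)
=\sum_{t\notin\Labels^{-1}(a)}\Prob_s^\sigma\bigl(\Next(a\Until\{t\})\bigr)\cdot\Prob_t^{\sigma[\cdot]}(\psi)
\]
is not well-formed for a fixed history-dependent~$\sigma$: the residual scheduler depends on the whole prefix $sx$, not just on the endpoint~$t$, so the sum cannot be collapsed to range over~$t$. The correct statement either keeps the sum over full prefixes $sxt$, or takes $\opt_\sigma$ on both sides with the fixed factor $\opt_\tau\Prob_t^\tau(\psi)$, exactly as in the commented-out lemma. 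Your splicing paragraph shows you understand this; just make sure the final write-up does not display the malformed version.

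Second, the duality shortcut you propose for $\AllP_{>0}$ and $\ExP_{=1}$ does not go through. One has $\AllP_{>0}\chi\equiv\neg\ExP_{=1}\neg\chi$, but
\[
\neg\Next\bigl(a\Until(\neg a\wedge\psi)\bigr)
\equiv \Next\Globally a \;\vee\; \Next\bigl(a\Until(\neg a\wedge\neg\psi)\bigr),
\]
which is not of the shape $\Next(a\Until(\neg a\wedge\psi'))$, so you cannot simply feed it back into an already-handled case. This is harmless, because the direct argument via splicing with the simultaneously $\psi$-optimal scheduler from \cref{prop:optimal-strat} handles all four quantifiers uniformly and is no harder for $\AllP_{>0}$ and $\ExP_{=1}$ than for the two you spell out; just do that and drop the duality remark.
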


\iffalse
\subsection{Queries of the form $Q(a\Until_{\leq ?} b)$}
\fi

\Cref{alg:qual-until-upper} is our algorithm for
computing the values of a quantile query where we look for
an upper bound on the accumulated reward.
\begin{algorithm}
\begin{tabbing}
\hspace*{1em}\=\hspace{1em}\=\hspace{1em}\=\hspace{1em}\=
\hspace{1em}\=\hspace{1em}\= \kill
\emph{Input:} MDP~$\calM=(\States,\Actions,\gamma,\Labels,\reward,\delta)$,
$\phi=Q(a\Until_{\leq ?} b)$ \\[1ex]
\textbf{for each} $s\in\States$ \textbf{do} \\
\>\textbf{if} $s\models b$ \textbf{then}
 $v(s)\assign 0$ \textbf{else} $v(s)\assign\infty$ \\
$X\assign\{s\in\States:v(s)=0\}$; $R\assign\{0\}$\\
$Z\assign\{s\in\States:\text{$s\models a\wedge\neg b$ and $\reward(s)=0$}\}$ \\
\textbf{while} $R\neq\emptyset$ \textbf{do}\+\\
$r\assign\min R$\,; $Y\assign\{s\in X:v(s)\leq r\}\setminus Z$ \\
\textbf{for each} $s\in\States\setminus X$ with
$s\models a\wedge Q\,\Next(Z\Until Y)$ \textbf{do} \+\\
$v(s)\assign r+\reward(s)$ \\
$X\assign X\cup\{s\}$; $R\assign R\cup\{v(s)\}$ \-\\
$R\assign R\setminus\{r\}$ \-\\
\textbf{return}~$v$
\end{tabbing}
\vspace*{-2ex}
\caption{\label{alg:qual-until-upper}Solving qualitative queries of the form
$Q(a\Until_{\leq ?} b)$}
\end{algorithm}
The algorithm maintains a set $X$ of states, a
set~$R$ of real numbers, and a table~$v$ mapping states to
non-negative real numbers or infinity.
The algorithm works by discovering states with finite value
repeatedly until only the states with infinite value remain.
Whenever a new state is discovered, it is put into~$X$
and its value is put into~$R$.
In the initialisation phase, the algorithm discovers all states
labelled with~$b$, which have value~$0$.
In every iteration of the main loop, new states are discovered
by picking the least value~$r$ that has not been fully processed
(\ie the least element of~$R$) and checking which undiscovered $a$-labelled
states fulfil
the $\PCTL^*$ formula $Q\,\Next(Z\Until Y)$, where $Y$~is the set of already
discovered states whose value is at most~$r$ and $Z$~is the set of states
labelled with~$a$ but not with~$b$ and having reward~$0$. Any such
newly discovered state~$s$ must have value~$r+\reward(s)$, and
$r$~can be deleted from~$R$ at the end of the current iteration.
The termination of
the algorithm follows from the fact that in every iteration
of the main loop either the set~$X$ increases or it remains
constant and one element is removed from~$R$.

\begin{lemma}\label{lemma:qual-until-upper}
Let $\calM$ be an MDP, $\phi=Q(a\Until_{\leq ?} b)$
a qualitative query, and let $v$ be
the result of \cref{alg:qual-until-upper} on
$\calM$ and~$\phi$.
Then $v(s)=\val_\phi(s)$
for all states~$s$.
\end{lemma}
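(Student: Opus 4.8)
The plan is to first establish a recursive characterisation of $\val_\phi$ that mirrors the structure of the algorithm, and then to verify by a Dijkstra-style induction that \cref{alg:qual-until-upper} computes it. Throughout, $Q\in\{\AllP_{>0},\ExP_{>0},\AllP_{=1},\ExP_{=1}\}$; all four queries are \emph{minimising}, so by \cref{prop:value} we have $t\models\phi[r]$ iff $\val_\phi(t)\le r$ for every state~$t$ and every $r\in\bbR$. Write $V_r\coloneqq\{t\in\States:\val_\phi(t)\le r\}$ and let $Z$ be as in the algorithm. The characterisation I aim for is: $\val_\phi(s)=0$ if $s\models b$; $\val_\phi(s)=\infty$ if $s\models\neg a\wedge\neg b$; and for $s\models a\wedge\neg b$,
\[
\val_\phi(s)=\reward(s)+\min\{r\ge 0:s\models Q\,\Next(Z\Until(V_r\setminus Z))\}\,,
\]
with the conventions $\min\emptyset=\infty$ and $\reward(s)+\infty=\infty$.

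To prove this identity for $s\models a\wedge\neg b$, I would argue in three steps. First, since a reward is earned upon leaving a state and such an~$s$ is neither a target nor blocked, a path from~$s$ satisfies $a\Until_{\le r}b$ exactly when its suffix from position~$1$ satisfies $a\Until_{\le r-\reward(s)}b$; hence $s\models\phi[r]$ iff $s\models Q\,\Next(a\Until_{\le r'}b)$ where $r'\coloneqq r-\reward(s)$. Second, since every state in~$Z$ has reward~$0$ and satisfies $a\wedge\neg b$, one verifies the $\PRCTL^*$ path equivalence $a\Until_{\le r'}b\equiv Z\Until\bigl(\neg Z\wedge(a\Until_{\le r'}b)\bigr)$: a path either already lies outside~$Z$, in which case both sides reduce to $a\Until_{\le r'}b$ at the current state, or it stays in~$Z$ at no reward cost until it first leaves~$Z$, from where it must satisfy $a\Until_{\le r'}b$ (a path trapped in~$Z$ forever never reaches a $b$-state and satisfies neither side). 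Substituting this into the previous step gives $s\models\phi[r]$ iff $s\models Q\,\Next\bigl(Z\Until(\neg Z\wedge(a\Until_{\le r'}b))\bigr)$. Finally, \cref{lemma:equivalences} — applied with a fresh atomic proposition labelling precisely the states of~$Z$, which changes no value $\val_\phi(t)$ — lets us push the leading~$Q$ inward, turning the inner path property $a\Until_{\le r'}b$ into the state property $Q(a\Until_{\le r'}b)$, which by \cref{prop:value} defines exactly $V_{r'}$; thus $\neg Z\wedge Q(a\Until_{\le r'}b)$ defines $V_{r'}\setminus Z$. Combining the three steps, $s\models\phi[r]$ iff $s\models Q\,\Next(Z\Until(V_{r'}\setminus Z))$ with $r'=r-\reward(s)$, and the claimed identity follows by minimising over~$r$, noting that the admissible set of~$r'$ is upward closed and, since the nondecreasing family $(V_{r'})_{r'\ge 0}$ takes only finitely many values, its infimum is attained. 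Read as "compute $V_0$, then the next level set, and so on in increasing order", this recursion is well founded, the mutual dependence among the zero-reward states inside one level being resolved by the semantics of $Q\,\Next(Z\Until\cdot)$.

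Given the characterisation, \cref{alg:qual-until-upper} is a Dijkstra-style sweep over reward levels, and I would prove its correctness by induction on the strictly increasing sequence of values successively removed from~$R$, with the invariant that, right before a value~$r$ is removed, (i) $v(t)=\val_\phi(t)$ for every $t\in X$, and (ii) $X$ contains every $b$-state, every state $t\models a\wedge\neg b$ with $\reward(t)>0$ and $\val_\phi(t)\le r$, and every state~$t$ with $\val_\phi(t)<r$. Property~(ii) forces the only states of~$V_r$ possibly still missing from~$X$ to lie in~$Z$, so that the set~$Y$ computed in that iteration equals $V_r\setminus Z$ exactly. Plugging $Y=V_r\setminus Z$ into the characterisation, the states $s\notin X$ that get discovered — those with $s\models a\wedge Q\,\Next(Z\Until Y)$ — are precisely those for which the minimum in the characterisation equals~$r$ (it cannot be smaller, for then, by~(ii) and the monotonicity of~$Y$ across iterations, $s$ would already have been discovered), so each receives the value $r+\reward(s)=\val_\phi(s)$, restoring the invariant. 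The base case consists of the initialisation and the first iteration $r=0$, where the characterisation at level~$0$ (hence using only~$Z$ and the $b$-states) is checked directly. It remains to see that a state~$s$ never added to~$X$ has $\val_\phi(s)=\infty$: otherwise the minimum~$r^*$ in its characterisation is finite and equals $\val_\phi(t)$ for some $t\in\States\setminus Z$ with $\val_\phi(t)<\infty$; by the invariant, $v(t)=r^*$ is eventually inserted into~$R$ and removed, at which point $s$ is discovered — a contradiction. Since termination has already been argued, this yields $v(s)=\val_\phi(s)$ for all~$s$.

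The crux is the recursive characterisation — in particular, getting the interaction between the reward bookkeeping, the special role of the zero-reward states~$Z$, and the quantifier manipulation of \cref{lemma:equivalences} exactly right; note that replacing the $Z\Until Y$ test by a plain $\Next Y$ test would make the algorithm unsound, since a state that can reach~$V_r$ only through yet-undiscovered zero-reward states would then be missed. Once the characterisation is in place, the correctness of the level sweep is a routine induction in the style of Dijkstra's algorithm, the only delicate point being to keep careful track of which kind of state ($b$-labelled, positive-reward, or zero-reward) is discovered at which level.
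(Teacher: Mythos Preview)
Your argument is correct, and the recursive characterisation you extract—$\val_\phi(s)=\reward(s)+\min\{r\ge 0:s\models Q\,\Next(Z\Until(V_r\setminus Z))\}$ for $s\models a\wedge\neg b$—is sound; the three-step derivation using \cref{lemma:equivalences} and \cref{prop:value} is exactly the logical content of the paper's implication chains, just packaged once and for all rather than repeated in each direction. The Dijkstra-style invariant is also correct, though verifying that it is maintained requires a small argument you leave implicit: one must check that no value $\val_\phi(t)$ with $t\notin Z$ falls strictly between two consecutively processed levels $r<r'$, which follows because any such~$t$ would have $r^*_t=\val_\phi(t)-\reward(t)\le r$, hence would already be in~$X$, forcing $r'\le\val_\phi(t)$.

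The paper takes a structurally different route: instead of first isolating a closed-form characterisation and then running one induction, it proves the two inequalities separately. The upper bound $\val_\phi(s)\le v(s)$ is shown by a forward induction on iterations (essentially the ``$\Leftarrow$'' half of your characterisation), while the lower bound is handled by a minimal-counterexample argument with respect to a bespoke partial order~$\prec$ that lexicographically combines $\val_\phi$ and $\reward$; this order is what lets the paper deal cleanly with zero-reward states without an explicit invariant. Your approach is more modular—the characterisation is a standalone statement one could reuse—whereas the paper's minimal-counterexample argument is shorter and avoids having to track which states enter~$X$ at which level. Both proofs hinge on the same use of \cref{lemma:equivalences}; the difference is organisational rather than mathematical.
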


\begin{proof}
We first prove that $s\models\phi[v(s)]$ for all states~$s$
with $v(s)<\infty$.
Hence, $v$~is an upper bound on~$\val_\phi$. We prove this by
induction on the number of iterations the while loop has performed
before assigning a finite value to~$v(s)$. Note that this is the
same iteration when $s$~is put into~$X$ and that $v(s)$ never changes
afterwards. If $s$~is put into~$X$ before the
first iteration, then $s\models b$ and therefore also
$s\models\phi[0]=\phi[v(s)]$. Now assume
that the while loop has already completed $i$~iterations
and is about to add $s$ to~$X$ in the current iteration; let
$X$, $r$ and~$Y$ be as at the beginning
of this iteration (after $r$ and~$Y$ have been assigned,
but before any new state is added to~$X$). By the induction
hypothesis, $t\models\phi[r]$ for all $t\in Y$.
Since $s$~is added to~$X$, we have that $s\models a\wedge
Q\,\Next(Z\Until Y)$.
Using \cref{lemma:equivalences} and some basic $\PRCTL^*$ laws,
we can conclude that $s\models\phi[v(s)]$ as follows:
\begin{align*}
& s\models a\wedge Q\,\Next(Z\Until Y) \\
\Longrightarrow\quad &
s\models a\wedge Q\,\Next(Z\Until(\neg Z\wedge Q(a\Until_{\leq r} b))) \\
\Longrightarrow\quad &
s\models a\wedge Q\,\Next(Z\Until(\neg Z\wedge(a\Until_{\leq r} b))) \\
\Longrightarrow\quad &
s\models a\wedge Q\,\Next(a\Until_{\leq r} b) \\
\Longrightarrow\quad &
s\models Q(a\Until_{\leq r+\reward(s)} b) \\
\Longrightarrow\quad &
s\models\phi[v(s)]
\end{align*}

To complete the proof, we need to show that $v$~is also
a lower bound on~$\val_\phi$.
We define a strict partial order~$\prec$ on states by setting
$s\prec t$ if one of the following conditions holds:
\begin{enumerate}
\item $s\models b$ and $t\not\models b$,
\item $\val_\phi(s)<\val_\phi(t)$, or
\item $\val_\phi(s)=\val_\phi(t)$ and $\reward(s)>\reward(t)$.
\end{enumerate}
Towards a contradiction, assume that the set~$C$ of states~$s$
with $\val_\phi(s)<v(s)$ is non-empty, and pick a state $s\in C$
that is minimal with respect to~$\prec$ (in particular,
${\val_\phi(s)<\infty}$). Since $s\models\phi[\infty]$ and
the algorithm correctly sets $v(s)$ to~$0$ if $s\models b$,
we know that $s\models a\wedge\neg b$ and
$\val_\phi(s)\geq\reward(s)$. Moreover, by
\cref{prop:value}, $s\models\phi[\val_\phi(s)]$.
Let $T$ be the set of all states $t\in S\setminus Z$
such that $\val_\phi(t)+\reward(s)\leq\val_\phi(s)$,
\ie $t\models\phi[\val_\phi(s)-\reward(s)]$.
Note that $T\neq\emptyset$ (because every state labelled with~$b$
is in~$T$) and that $t\prec s$ for all $t\in T$.
Since $s$~is a minimal counter-example, we know that
$v(t)\leq\val_\phi(t)<\infty$ for all $t\in T$.
Consequently, after some number of iterations
of the while loop all elements of~$T$ have been added to~$X$
and the numbers~$v(t)$ have been added to~$R$. Since
$R$~is empty upon termination, in a following iteration we
have that $r=\max\{v(t):t\in T\}$ and that $T\subseteq Y$.
%We claim that $s\models a\wedge\P_{>0}\,\Next(Z\Until Y)$.
Let $x\coloneqq\val_\phi(s)-\reward(s)$.
Using \cref{lemma:equivalences} and some basic
$\PRCTL^*$ laws, we can conclude that
$s\models Q\,\Next(Z\Until Y)$ as follows:
\begin{align*}
& s\models\neg b\wedge\phi[\val_\phi(s)] \\
\Longrightarrow\quad &
s\models Q(\neg b\wedge(a\Until_{\leq x+\reward(s)} b)) \\
\Longrightarrow\quad &
s\models Q\,\Next(a\Until_{\leq x} b) \\
\Longrightarrow\quad &
s\models Q\,\Next(Z\Until(\neg Z\wedge(a\Until_{\leq x} b))) \\
\Longrightarrow\quad &
s\models Q\,\Next(Z\Until(\neg Z\wedge Q(a\Until_{\leq x} b)))
\displaybreak \\
\Longrightarrow\quad &
s\models Q\,\Next(Z\Until T) \\
\Longrightarrow\quad &
s\models Q\,\Next(Z\Until Y)
\end{align*}
Since also $s\models a$, this means that $s$~is added
to~$X$ no later than in the current iteration.
Hence, $v(s)\leq r+\reward(s)\leq\val_\phi(s)$,
which contradicts our assumption that $s\in C$.
\qed
\end{proof}

\begin{theorem}\label{thm:qual-until-upper}
Qualitative queries of the form $Q(a\Until_{\leq ?} b)$ can be
evaluated in strongly polynomial time.
\end{theorem}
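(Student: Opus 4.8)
The correctness of \cref{alg:qual-until-upper} is exactly the content of \cref{lemma:qual-until-upper}, so the only thing left to establish is the running time, and in particular that the algorithm performs only polynomially many arithmetic operations, all on numbers of polynomially bounded bit-length, which is what strong polynomiality amounts to. The plan is to bound separately (i) the number of iterations of the main \textbf{while} loop and (ii) the work done in a single iteration.

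For (i): each state of $\calM$ is added to $X$ at most once, since the inner \textbf{for each} loop ranges only over $\States\setminus X$ and $X$ never shrinks; hence at most $\abs{\States}$ elements are inserted into $R$ after the initialisation, and each such insertion enlarges $R$ by at most one element. Since $\abs{R}=1$ immediately after the initialisation, $\abs{R}=0$ upon termination, every iteration removes exactly one element from $R$ (the current minimum $r$), and $R$ grows only when a state is added to $X$, it follows that the number of iterations is at most $\abs{\States}+1$.

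For (ii): computing $r=\min R$ and the set $Y$ in a given iteration costs $\Oh(\abs{\States})$ comparisons (or $\Oh(\log\abs{\States})$ with a heap). The core of an iteration is deciding, for every $s\in\States\setminus X$, whether $s\models a\wedge Q\,\Next(Z\Until Y)$. Treating membership in $Z$ and in $Y$ as fresh atomic propositions, $Q\,\Next(Z\Until Y)$ is a qualitative $\PCTL$ formula, and the set of states satisfying it is obtained by a standard graph analysis~\cite{CouYanJACM95}: one first computes the set of states satisfying $Z\Until Y$ under the relevant qualitative quantifier (a reachability computation for $\ExP_{>0}$, and a monotone fixpoint iteration over the transition graph, using end-component/attractor techniques, for $\AllP_{=1}$, $\ExP_{=1}$ and $\AllP_{>0}$), and then propagates one step through the action supports to account for $\Next$. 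All of this depends only on the support $\{(s,\alpha,t):\delta(s,\alpha,t)>0\}$ of the transition function, uses no arithmetic on the probabilities in $\calM$, and runs in time polynomial in $\abs{\States}$ and $\abs{\delta}$; the only arithmetic in the iteration consists of the comparisons above together with the additions $r+\reward(s)$, of which there are at most $\abs{\States}$ over the whole run (one per state added to $X$).

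Combining (i) and (ii), the algorithm performs $\Oh(\abs{\States}^2)$ comparisons, $\Oh(\abs{\States})$ additions, and $\Oh(\abs{\States})$ invocations of a graph subroutine, each running in time polynomial in $\abs{\States}$ and $\abs{\delta}$ and doing no numeric arithmetic; in particular the number of arithmetic operations is polynomial in the combinatorial size of $\calM$ and independent of the magnitudes of the rewards and probabilities. Finally, the processed values $r$ form a non-decreasing sequence and every new value has the form $r+\reward(s)$, so each number $v(s)$ computed is a sum of at most $\abs{\States}+1$ of the input rewards and hence has bit-length polynomial in the input size, which also bounds the space usage. Therefore \cref{alg:qual-until-upper} evaluates $\phi$ in strongly polynomial time. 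The point I expect to require the most care is (ii): making precise that each of the four qualitative quantifiers $\AllP_{>0},\ExP_{>0},\AllP_{=1},\ExP_{=1}$ applied to $\Next(Z\Until Y)$ really does reduce to a purely graph-theoretic computation on $\calM$ — especially the $\AllP_{>0}$ and $\ExP_{=1}$ cases and the one-step handling of $\Next$ — with no access to the numeric transition probabilities.
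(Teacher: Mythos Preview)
Your proposal is correct and follows essentially the same approach as the paper: correctness via \cref{lemma:qual-until-upper}, an $\Oh(\abs{\States})$ bound on the number of iterations from the bookkeeping on $X$ and~$R$, a per-iteration cost dominated by the qualitative $\PCTL$ check $Q\,\Next(Z\Until Y)$, and strong polynomiality from the fact that the only arithmetic is addition. The paper is slightly more concrete on (ii), quoting explicit bounds of $\Oh(\abs{\delta})$ for $Q\in\{\AllP_{>0},\ExP_{>0},\AllP_{=1}\}$ and $\Oh(\abs{\States}\cdot\abs{\delta})$ for $Q=\ExP_{=1}$ (with a reference to \cite{BaierK08}), which directly addresses the point you flagged as needing care; conversely, your explicit bit-length argument is a detail the paper leaves implicit.
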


\begin{proof}
By \cref{lemma:qual-until-upper}, \cref{alg:qual-until-upper}
can be used to compute the values of $Q(a\Until_{\leq ?} b)$.
During the execution of the algorithm,
the running time of one iteration of the while loop
is dominated by computing the set of states that fulfil
the $\PCTL^*$ formula $Q\,\Next(Z\Until Y)$, which can be
done in time
$\Oh(\abs{\delta})$ for $Q\in\{\AllP_{>0},\ExP_{>0},\AllP_{=1}\}$
and in time $\Oh(\abs{\States}\cdot\abs{\delta})$ for $Q=\ExP_{=1}$
(see \cite[Chapter 10]{BaierK08}).
In~each iteration
of the while loop, one element of~$R$ is removed, and the number
of elements that are put into~$R$ in total is bounded by the
number of states in the given MDP. Hence, the
number of iterations is also bounded by the number of states, and the
algorithm runs in time $\Oh(\abs{\States}\cdot\abs{\delta})$ or
$\Oh(\abs{\States}^2\cdot\abs{\delta})$, depending on~$Q$.
Finally, since the only arithmetic operation used by the algorithm is
addition, the algorithm is strongly polynomial.
\qed
\end{proof}

Of course, queries of the form $\ExP_{>0}(a\Until_{\leq ?} b)$
can actually be evaluated in time $\Oh(\abs{\States}^2+\abs{\delta})$
using D\ij{}kstra's algorithm since the value
of a state with respect to such a query is just the weight of a shortest path
from~$s$ via $a$-labeled states to a $b$-labelled state.

\Cref{alg:qual-until-upper} also gives us a useful
upper bound on the value of a state with respect to a
qualitative query.
%of the form $Q(a\Until_{\leq ?} b)$.

\begin{proposition}\label{prop:bound-qual-until-upper}
Let $\calM$ be an MDP, $\phi=Q(a\Until_{\leq ?} b)$ a
qualitative quantile query, $n=\abs{\Labels^{-1}(a)}$, and
$c=\max\{\reward(s):s\in\Labels^{-1}(a)\}$. Then
$\val_\phi(s)\leq nc$ for all states~$s$ with
$\val_\phi(s)<\infty$.
\end{proposition}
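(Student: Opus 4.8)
The plan is to extract the bound directly from the run of \cref{alg:qual-until-upper}, using the fact (established in \cref{lemma:qual-until-upper}) that the algorithm computes $\val_\phi$ exactly. The key observation is that every state $s$ receiving a finite value gets it from the assignment $v(s) \assign r + \reward(s)$, where $r$ is the current element of~$R$ being processed, and $r$ is itself a value $v(t)$ that was put into $R$ in an earlier iteration. So each finite value is a sum of the form $\reward(s_{i_1}) + \reward(s_{i_2}) + \cdots + \reward(s_{i_k})$ for some chain of states discovered one after another, with $v(s_{i_k})=0$ (the last one being a $b$-state), and I want to argue that this chain can be taken without repetitions among the $a$-labelled states with nonzero reward, so the sum has at most $n$ nonzero terms each bounded by~$c$.

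First I would set up a \emph{predecessor function}: whenever a state $s\notin X$ with $\val_\phi(s)<\infty$ is added to $X$ in an iteration processing $r=\min R$, record a predecessor $p(s)$ to be some state $t$ with $v(t)=r$ that was previously added to $X$ (such a $t$ exists because $r\in R$ was put there by a prior discovery, or $r=0$ and we can take $p(s)$ to be a $b$-state). For $b$-states, set $p(s)$ undefined. By construction $v(s) = v(p(s)) + \reward(s)$ whenever $p(s)$ is defined, and $v(p(s)) = r < v(s)$ strictly when $\reward(s)>0$, whereas $v(p(s))=v(s)$ when $\reward(s)=0$. Following $p$ from any state $s$ with $v(s)<\infty$ yields a finite chain $s = s_0, s_1 = p(s_0), s_2 = p(s_1),\dots$ terminating at a $b$-state, and $v(s) = \sum_{j} \reward(s_j)$ over this chain. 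Since only states with $s_j \models a \wedge \neg b$ and $\reward(s_j)>0$ contribute a nonzero term, and all such states lie in $\Labels^{-1}(a)$, it remains to bound how many \emph{distinct} such states can appear.

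The crux is therefore to show no $a$-labelled state of nonzero reward repeats along the $p$-chain. This follows because the value is strictly decreasing across every nonzero-reward step: if $s_j$ has $\reward(s_j)>0$ then $v(s_{j+1}) < v(s_j)$, and $v$ is non-increasing along every step; hence the sequence $v(s_0) \geq v(s_1) \geq \cdots$ is non-increasing overall and strictly decreases exactly at the nonzero-reward states. A repeated nonzero-reward state would force a strict decrease returning to an earlier equal value, a contradiction. Consequently the chain visits at most $n = \abs{\Labels^{-1}(a)}$ states with nonzero reward (in fact at most $n$ such states even counting $s_0$ itself, if $s_0\models a\wedge\neg b$), each contributing at most $c = \max\{\reward(s):s\in\Labels^{-1}(a)\}$, giving $\val_\phi(s) = v(s) \leq nc$. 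The main obstacle is making the predecessor bookkeeping precise — in particular checking that a valid predecessor with the claimed value always exists at the moment of discovery and that zero-reward states (which pass through the set $Z$ and do not individually get a $p$-step attributing reward) are handled so they contribute nothing to the sum; once that is nailed down the counting argument is immediate.
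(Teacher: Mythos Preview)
Your proposal is correct and follows essentially the same approach as the paper, which simply states ``by induction on the number of iterations \cref{alg:qual-until-upper} performs before assigning a finite number to $v(s)$'' and leaves all details to the reader; your predecessor-chain argument is one natural way to make that induction precise. A slightly more direct unfolding of the paper's hint is to observe that the sequence of values $r$ processed by the while loop is strictly increasing, that each new value exceeds some earlier one by $\reward(s)\leq c$ for a \emph{distinct} state $s\in\Labels^{-1}(a)$ with $\reward(s)>0$, and hence that there are at most $n$ such increments---but this is really the same counting argument viewed globally rather than along a single chain.
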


\begin{proof}
By induction on the number of iterations \Cref{alg:qual-until-upper}
performs before assigning a finite number to $v(s)$.
\qed
\end{proof}

Finally, let us remark that our algorithm can be extended to handle queries
of the form $Q(a\Until_{>?} b)$, where a \emph{lower bound} on the
accumulated reward is sought.
To this end, the initialisation step has to be extended to identify
states with value~$-\infty$ and the rule for discovering new states has to be
modified slightly. We invite the reader to make the necessary modifications and
to verify the correctness of the resulting algorithm.
This proves that the fragment of \PRCTL with probability thresholds $0$
and~$1$ and without reward constraints of the form $=r$ can be model-checked
in polynomial time. Previously, a polynomial-time algorithm was only known
for the special case where the models are restricted to MDPs in which every
loop contains a state with nonzero reward \cite{JurdzinskiSL08}.

\section{Evaluating quantitative queries}
\label{section:quantitative}

In the following, we assume that all state rewards are natural numbers.
This does not limit the applicability of our results since any MDP~$\calM$
with nonnegative rational numbers as state rewards can be converted efficiently
to an MDP~$\calM'$ with natural rewards by multiplying all state rewards with
the least common multiple~$K$ of all denominators occurring in state rewards.
It follows that $\val^{\calM'}_\phi(s)=K\cdot\val^{\calM}_\phi(s)$
for any quantile query~$\phi$ and any state~$s$ of~$\calM$, so in order to
evaluate a quantile query on~$\calM$ we can evaluate it on~$\calM'$ and
divide by~$K$.
Throughout this section, we also assume that any transition probability and
any probability threshold~$p$ occurring in a quantile query is rational.
Finally, we define the \emph{size} of an MDP
$\calM=(\States,\Actions,\gamma,\Labels,\reward,\delta)$ to be
$\abs{M}\coloneqq\sum_{s\in\States}\size{\reward(s)}+
\sum_{(s,\alpha,t)\in\delta,\alpha\in\gamma(s)}\size{\delta(s,\alpha,t)}$,
where $\size{x}$~denotes the length of the binary representation of~$x$.

\subsection{Existential queries}
\label{sect:exists-until-upper}

In order to solve queries of the form $\ExP_{\geq p}(a\Until_{\leq ?} b)$ or
$\ExP_{>p}(a\Until_{\leq ?} b)$,
we first show how to compute the \emph{maximal} probabilities for fulfilling
the path formula $a\Until_{\leq r} b$ when we are given the reward bound~$r$.
Given an MDP~$\calM$, $a,b\in\AP$ and $r\in\bbN$, consider the following linear
program over the variables $x_{s,i}$ for $s\in\States$ and
$i\in\{0,1,\ldots,r\}$:%\goodbreak
\begin{gather*}
\textstyle
\text{Minimise $\sum x_{s,i}$ subject to} \\
\begin{alignedat}{2}
x_{s,i} &\geq 0 &\quad\quad\quad\quad\quad\quad\quad\quad
 & \text{for all $s\in\States$ and $i\leq r$,} \\
x_{s,i} &= 1 && \text{for all $s\in\Labels^{-1}(b)$ and $i\leq r$,} \\
x_{s,i}
 &\geq\rlap{$\sum_{t\in\States}\delta(s,\alpha,t)\cdot x_{t,i-\reward(s)}$} \\
&&& \text{for all $s\in\Labels^{-1}(a)$, $\alpha\in\Actions$
 and $\reward(s)\leq i\leq r$.}
\end{alignedat}
\end{gather*}
This linear program is of size $r\cdot\abs{\calM}$, and it can be shown that
setting $x_{i,s}$ to $\max_{\sigma}\Prob_s^{\sigma}(a\Until_{\leq i} b)$
yields the optimal solution. Hence, we can compute the numbers
$\max_{\sigma}\Prob_s^{\sigma}(a\Until_{\leq i} b)$
in time $\poly(r\cdot\abs{\calM})$.
%
\iffalse
\begin{theorem}\label{thm:max-prob-until-upper}
Given an MDP~$\calM$, $a,b\in\AP$, and $r\in\bbN$, the probabilities
$\max_{\sigma}\Prob_s^{\sigma}(a\Until_{\leq i} b)$ for $i\in\{0,1,\ldots,r\}$
can be computed in time $\poly(r\cdot\abs{\calM})$.
\end{theorem}

As in \cref{sect:forall-until-upper}, the linear program given above
consists of $r$~layers, which may be solved one after the other,
and the numbers
$\max_{\sigma}\Prob_s^{\sigma}(a\Until_{\leq i} b)$ can be computed
without resorting to an LP solver if all rewards are positive.

The following lemma is the analogue of
\cref{lemma:conv-min-prob-until-upper} for maximal
probabilities.
\fi

Our algorithm for computing the value of a state~$s$ \wrt a query of the
form $\ExP_{>p}(a\Until_{\leq ?} b)$ just computes
the numbers $\max_\sigma\Prob_s^{\sigma}(a\Until_{\leq i} b)$
for increasing~$i$ and stops as soon as this probability exceeds~$p$.
However, in order to make this algorithm work and to show that it does
not take too much time, we need a bound on the
value of~$s$ provided this value is not infinite.
Such a bound can be derived from the following lemma,
which resembles a result by Hansen~\ea, who gave a bound on the
convergence rate of \emph{value iteration} in
\emph{concurrent reachability games}~\cite{HansenIM11}.
Our proof is technically more involved though, since we have to deal
with paths that from some point onwards do not earn any more rewards.

\begin{lemma}\label{lemma:conv-max-prob-until-upper}
Let $\calM$ be an MDP where the denominator of each
transition probability is at most~$m$,
and let $n=\abs{\Labels^{-1}(a)}$,
$c=\max\{\reward(s):s\in\Labels^{-1}(a)\}$
and $r=k n c m^{-n}$ for some $k\in\bbN^+\!$.
Then
$\max_\sigma\Prob_s^\sigma(a\Until b)
<\max_\sigma\Prob_s^\sigma(a\Until_{\leq r} b)+\e^{-k}$
for all $s\in\States$.
\end{lemma}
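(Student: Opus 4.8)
The plan is to show that the gap between the unbounded maximal probability $\max_\sigma\Prob_s^\sigma(a\Until b)$ and the $r$-bounded one decays geometrically in the number of "reward blocks" of length $nc$ that fit into $r$, once we account for the loss incurred by transition probabilities with small denominators. First I would fix an optimal scheduler $\sigma^*$ for the unbounded property $a\Until b$ (this exists by \cref{prop:optimal-strat}); without loss of generality we may take $\sigma^*$ memoryless and assume all non-$b$, non-$a$ states are absorbing, and that $b$-states are absorbing as well. Under $\sigma^*$ the MDP becomes a Markov chain, and $\max_\sigma\Prob_s^\sigma(a\Until b)=\Prob_s^{\sigma^*}(\Finally\,b)$. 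The quantity we must bound above is $\Prob_s^{\sigma^*}(\Finally\,b)-\max_\sigma\Prob_s^\sigma(a\Until_{\leq r} b)$, and since $\max_\sigma\Prob_s^\sigma(a\Until_{\leq r} b)\geq\Prob_s^{\sigma^*}(a\Until_{\leq r} b)$, it suffices to bound $\Prob_s^{\sigma^*}(\{b\text{ reached}\}\setminus\{b\text{ reached with accumulated reward}\leq r\})=\Prob_s^{\sigma^*}(\{b\text{ is reached, but only after accumulating reward}>r\})$.

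Next I would analyze, within the chain induced by $\sigma^*$, the event $E_j$ that after accumulating reward at least $jnc$ the process has still not reached $b$ and has not reached an absorbing non-$b$ state (equivalently, it is still wandering among $a$-states while continuing to earn reward). The key combinatorial step is: from any $a$-state $t$, within at most $n$ steps the process either (i) reaches $b$, or (ii) reaches a state from which $b$ is unreachable, or (iii) earns strictly positive reward — because there are only $n$ $a$-states, so a reward-free path of length $>n$ among $a$-states must revisit an $a$-state, and from a reward-free cycle of $a$-states the chain, being optimal, would gain nothing by looping. More precisely, since $\sigma^*$ is optimal for $\Finally\,b$, from any $a$-state either $b$ is reachable under $\sigma^*$ or not; if it is, there is a path of length $\le n$ to a state that is not a reward-free $a$-state, i.e. to $b$, to a dead state, or to an $a$-state of positive reward (which then contributes at least one unit towards the next $nc$ block after at most $n$ further steps, using $c\ge 1$). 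The upshot is that conditioned on not yet having reached $b$ or a dead state, each "phase" of accumulating the next $nc$ units of reward terminates (in $b$, in a dead state, or by starting a fresh phase) and there is a uniform lower bound $q>0$ on the probability that the phase ends by reaching $b$ or a dead state rather than continuing. Here is where the denominator bound enters: along any path of length $\le n$ the product of transition probabilities is at least $m^{-n}$ (each nonzero probability is at least $1/m$), so $q\ge m^{-n}$.

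Therefore $\Prob_s^{\sigma^*}(E_j)\le(1-q)^j\le(1-m^{-n})^j$, and in particular the event "$b$ is reached only after reward $>r=knc\,m^{-n}$" — hmm, I need to be careful about the exponent bookkeeping. Let me recount: each phase spans $nc$ units of reward, so accumulating reward $>r$ requires surviving at least $\lfloor r/(nc)\rfloor = k\,m^{-n}$... that is not an integer count of phases. The cleaner accounting is to make each phase span reward exactly $ncm^{-n}$-worth? That does not work either since rewards are integers. The right move is: the probability of surviving $j$ consecutive phases of width $nc$ each is at most $(1-m^{-n})^j$, but actually the per-phase survival probability is $(1-m^{-n})$ only in the crude bound; combining over the $k\,m^{-n}$... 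No. Let me reconsider — set the phase width to $nc$ and note $r = knc m^{-n}$ means $r/(nc) = k m^{-n}$ phases, but the per-phase failure probability should be $1-m^{-n}$, giving survival $\le (1-m^{-n})^{k m^{-n}}$, and $(1-m^{-n})^{m^{-n}}$ is NOT bounded by $\e^{-1}$. So instead the phase width must be chosen so that the number of phases in $r$ is $k$ times (something) and per-phase failure is $m^{-n}$; i.e. phase width $w$ with $r/w = k$ and per-phase survival $1-m^{-n}$... then survival $\le(1-m^{-n})^k\le \e^{-km^{-n}}$, still not $\e^{-k}$. The intended reconciliation must be that surviving a block of reward $nc m^{-n}$... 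The hard part, and the main obstacle, is exactly this exponent bookkeeping: showing that surviving a reward-block of width $ncm^{-n}$ can be done with probability at most $1-m^{-n}$ is false, but surviving a block of width $nc$ has failure (i.e. termination) probability at least $m^{-n}$, and one needs $r=knc m^{-n}$ blocks of width... I would resolve this by tracking phases of width $nc$: there are $r/(nc)=km^{-n}$ of them; per-phase survival $\le 1-m^{-n}$; so total survival $\le(1-m^{-n})^{km^{-n}}$. Using $1-x\le\e^{-x}$ with $x=m^{-n}$ gives $(1-m^{-n})^{km^{-n}}\le \e^{-k m^{-2n}}$ — weaker. To get $\e^{-k}$ one instead wants per-phase survival $\le \e^{-m^n}$? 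No. I suspect the intended statement uses that within $n$ steps the probability of leaving the "still wandering with reward still to gain" set is $\ge m^{-n}$ and that the reward gained per such step-block is at most $nc$ but typically, to make $nc$ reward, you need about $m^n$ such blocks of $n$ steps each; so making reward $r=knc m^{-n}$ needs about $k$ such $n$-step blocks... wait: reward $r$ with each $n$-step block yielding $\le nc$ reward needs $\ge r/(nc)=km^{-n}$ blocks — still fractional. I think the correct reading is: each $n$-step block gains at most $nc$ reward but at least one unit when it does gain, so to accumulate $r$ you need at least $r/(nc)$ blocks that gained reward, hence $\ge km^{-n}$; and the block that gains reward also has probability $\ge m^{-n}$ of instead terminating in $b$ or dead. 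So survival past reward $r$ has probability $\le (1-m^{-n})^{r/(nc)} = (1-m^{-n})^{km^{-n}}$. For this to be $\le \e^{-k}$ we'd need $(1-m^{-n})^{m^{-n}}\le\e^{-1}$, which is false. The resolution — and the genuine crux I expect to spend the most effort on — is that the paper must be grouping $m^n$-many $n$-step blocks into one "super-phase": over $m^n$ consecutive $n$-step opportunities to terminate, each with probability $\ge m^{-n}$, the super-phase survival is $\le(1-m^{-n})^{m^n}\le\e^{-1}$; and each super-phase consumes at most $nc\cdot m^n$ reward — no, that makes $r$ need $k$ super-phases of width $ncm^n$, i.e. $r=kncm^n$, the reciprocal of what's stated. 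I will therefore revisit the claim that per-$n$-step-block termination probability is $\ge m^{-n}$: in fact the correct lower bound on per-step termination, amortized, should be such that over a reward-window of width $ncm^{-n}$ one gets $\e^{-1}$ survival, which forces per-unit-reward survival $\le \e^{-m^n/(nc)}$, i.e., per reward-unit termination probability $\gtrsim m^n/(nc)$ — impossible since probabilities are $\le 1$. Given these sign/exponent tensions, the honest plan is: (1) reduce to the induced chain; (2) identify "phases" of reward-width $nc$ with a uniform per-phase termination probability $\ge m^{-n}$ via the pigeonhole-among-$a$-states and optimality-of-$\sigma^*$ argument; (3) conclude $\Prob_s^{\sigma^*}(\text{reward}>r\text{ before }b)\le (1-m^{-n})^{\lfloor r/(nc)\rfloor}$; (4) substitute $r=kncm^{-n}$ and apply $1-x\le \e^{-x}$, then simplify — deferring to the calculation whether the final bound is $\e^{-k}$ as claimed or requires adjusting constants, with the expectation that the correct constant bookkeeping (the interplay of $m^{-n}$, $nc$, and the $\e^{-x}\ge 1-x$ inequality) is exactly where the real work lies.
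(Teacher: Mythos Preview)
Your overall strategy---fix a memoryless optimal scheduler~$\tau$ for $a\Until b$, reduce to the induced Markov chain, and run a phase argument where each phase spans reward~$nc$ and has termination probability at least~$m^{-n}$---is exactly the paper's approach. The reason your exponent bookkeeping refuses to close is that the statement you were given contains a typo: the bound should read $r=kncm^{n}$, not $kncm^{-n}$ (the paper itself uses $r=kncm^{n}$ in \cref{lemma:value-exists-until-upper} and in the final computation of the proof). With the correct~$r$, your computation in step~(3) gives precisely
\[
\Prob_s^{\tau}\bigl(\text{survive past reward }r\bigr)
\;\le\;(1-m^{-n})^{r/(nc)}
\;=\;(1-m^{-n})^{km^{n}}
\;<\;\e^{-k},
\]
using $(1-m^{-n})^{m^{n}}<\e^{-1}$. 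You actually wrote down $r=kncm^{n}$ yourself and then discarded it as ``the reciprocal of what's stated''; you were right and the statement was wrong.

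There is one genuine gap in your plan beyond the typo. Your phase argument is indexed by accumulated reward, so it silently assumes that reward keeps accumulating. But the chain induced by~$\tau$ can contain zero-reward $a$-states (your set~$Z$), and a path that eventually gets trapped in~$Z$ forever never advances the reward clock and never triggers a new phase---yet such a path also never reaches~$b$, so it should not count against you. The paper handles this by working not with the event ``reach $b$ or a dead state'' but with the path formula $\psi=b\vee D\vee\Globally Z$: it shows $\Prob_s^{\tau}(\neg(a\Until_{\leq r}\psi))<\e^{-k}$ via the phase argument (paths violating $a\Until_{\leq i}\psi$ are exactly those that stay in non-dead, non-$b$ $a$-states and accumulate reward $>i$, so the phase-by-reward decomposition is legitimate), and then observes that $a\Until_{\leq r}(D\vee\Globally Z)$ is disjoint from $a\Until b$ under~$\tau$, which yields the final inequality. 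Your informal trichotomy ``reach~$b$, reach a dead state, or earn positive reward'' is morally the same idea, but to make it rigorous you need to separate out the $\Globally Z$ event explicitly rather than arguing that optimal~$\tau$ ``gains nothing by looping'' in a zero-reward cycle.
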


\begin{proof}
Without loss of generality, assume that all $b$-labelled states
are absorbing.
Let us call a state~$s$ of~$\calM$ \emph{dead} if
$s\models\AllP_{=0}(a\Until b)$, and denote by~$D$ the set of
dead states. Note that $s\in D$ for all states~$s$
with $s\models\neg a\wedge\neg b$.
Finally, let $\tau$ be a memoryless scheduler such that
$\Prob_s^\tau({a\Until b})=\max_\sigma\Prob_s^\sigma(a\Until b)$
for all states~$s$, and denote by~$Z$ the set of all states~$s$
with $s\models a\wedge\neg b$ and $\reward(s)=0$.
By the definition of $D$ and~$Z$, we have that
$\Prob_s^\tau({a\Until_{\leq r}(D\vee\Globally Z)\wedge a\Until b})=0$
for all $s\in S$.
Moreover, if $s$~is not dead, then
there must be a simple path from~$s$ to a $b$-labelled state via
$a$-labelled states in the Markov chain induced by~$\tau$.
Since any $a$-labelled state has reward at most~$c$, this implies
that $\Prob_s^\tau(a\Until_{\leq nc} b)\geq m^{-n}$ for
all non-dead states~$s$.
Now let $\psi$ be the path formula $b\vee D\vee\Globally Z$.
We claim that
$\Prob_s^\tau(\neg(a\Until_{\leq r}\psi))<\e^{-k}$
for all states~$s$.
To prove this, let $s\in\States$.
We first show that
$\Prob_s^\tau(a\Until_{\leq i+nc}\psi\mid\neg(a\Until_{\leq i}\psi))
\geq m^{-n}$ for all $i\in\bbN$ with
$\Prob_s^\tau(a\Until_{\leq i}\psi)<1$.
Let $X$ be the set of sequences $xt\in S^*\cdot S$
such that $xt\in{\{s\in\States\setminus D:s\models a\wedge\neg b\}}^*$,
$\reward(x)\leq i$ and $\reward(xt)>i$. It is easy to see that
the set $\{xt\cdot\States^\omega:xt\in X\}$ is a
partition of the set of infinite sequences over~$S$ that
violate $a\Until_{\leq i}\psi$. Using the fact that
$\tau$~is memoryless, we can conclude that\goodbreak
\begin{align*}
& \Prob_s^\tau(a\Until_{\leq i+nc}\psi\mid\neg(a\Until_{\leq i}\psi)) \\
\geq\; & \Prob_s^\tau(a\Until_{\leq i+nc} b\mid\neg(a\Until_{\leq i}\psi)) \\
=\; & \Prob_s^\tau(a\Until_{\leq i+nc} b\cap X\cdot\States^\omega)/
 \Prob_s^\tau(X\cdot\States^\omega) \displaybreak[0] \\
=\; & \sum_{xt\in X}\Prob_s^\tau(a\Until_{\leq i+nc} b\cap
 xt\cdot S^\omega)/\Prob_s^\tau(X\cdot\States^\omega) \\
=\; & \sum_{xt\in X}\Prob_t^{\tau}(a\Until_{\leq i-\reward(x)+nc} b)\cdot
 \Prob_s^\tau(xt\cdot S^\omega)/\Prob_s^\tau(X\cdot\States^\omega) \\
\geq\; & \sum_{xt\in X}\Prob_t^{\tau}(a\Until_{\leq nc} b)\cdot
 \Prob_s^\tau(xt\cdot S^\omega)/\Prob_s^\tau(X\cdot\States^\omega) \\
\geq\; & \sum_{xt\in X} m^{-n}\cdot\Prob_s^\tau(xt\cdot S^\omega)/
 \Prob_s^\tau(X\cdot\States^\omega) \\
=\; & m^{-n}\,.
\end{align*}
Now, applying this inequality successively, we get that
$\Prob_s^\tau(\neg(a\Until_{\leq r}\psi))
\leq (1-m^{-n})^{\frac{r}{nc}}=(1-m^{-n})^{km^n}<\e^{-k}$.
\iffalse
Now, as in the proof of
\cref{lemma:conv-min-prob-until-upper} and using the
fact that $\tau$~is memoryless, we can infer that
$\Prob_s^\tau(\neg(a\Until_{\leq r}\psi))<\e^{-k}$ for
all states~$s$, where $\psi$~denotes the path formula
$b\vee D\vee\Globally Z$.
\fi
Finally,
\begin{align*}
\Prob_s^\tau(a\Until b)
&=\Prob_s^\tau(a\Until b\wedge\neg(a\Until_{\leq r}(D\vee\Globally Z))) \\
&\leq\Prob_s^\tau(\neg(a\Until_{\leq r}(D\vee\Globally Z))) \\
&\leq\Prob_s^\tau(\neg(a\Until_{\leq r}\psi)\vee(a\Until_{\leq r} b)) \\
&\leq\Prob_s^\tau(\neg(a\Until_{\leq r}\psi))
 +\Prob_s^\tau(a\Until_{\leq r} b) \\
&<\e^{-k}+\max\nolimits_\sigma\Prob_s^\sigma(a\Until_{\leq r} b)
\end{align*}
for all $s\in\States$. Since
$\Prob_s^\tau(a\Until b)=\max_\sigma\Prob_s^\sigma(a\Until b)$,
this inequality proves the lemma.
\qed
\end{proof}

Given an MDP~$\calM$ and $a,b\in\AP$, we denote by~$\tilde{\calM}$ the MDP
that arises from~$\calM$ by performing the following transformation:
\begin{enumerate}
\item In each state~$s$, remove all actions~$\alpha$
with
$\sum_{t\in\States}\delta(s,\alpha,t)\cdot\max_\sigma\Prob_t^\sigma(a\Until b)
<\max_\sigma\Prob_s^\sigma(a\Until b)$ from the set $\gamma(s)$ of enabled
actions.
\item Label all states~$s$ such that $s\models\P_{=0}(a\Until b)$ with~$b$.
\end{enumerate}
The following
\ifapdx
lemma (proved in the appendix)
\else
lemma, whose proof is rather technical,
\fi
allows us to reduce the query
$\ExP_{\geq p}(a\Until_{\leq ?} b)$ to the qualitative query
$\ExP_{=1}(a\Until_{\leq ?} b)$ in the special case that $p$~equals the
optimal probability of fulfilling $a\Until b$.

\begin{lemma}\label{lemma:reduction-max-qual}
Let $\calM$ be an MDP, $\phi=\ExP_{\geq p}(a\Until_{\leq ?} b)$ and
$\tilde\phi=\ExP_{=1}(a\Until_{\leq ?} b)$.
Then $\val_\phi^{\calM}(s)=\val_{\tilde{\phi}}^{\tilde{\calM}}(s)$
for all states~$s$ of~$\calM$ with
$p=\max_\sigma\Prob_s^\sigma(a\Until b)$.
\end{lemma}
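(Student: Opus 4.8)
The plan is to show the two values agree by establishing inequalities in both directions, exploiting the fact that in $\tilde\calM$ every remaining action is \emph{optimal} for the objective $a\Until b$, so that the optimal probability is preserved and, crucially, cannot drop along any trajectory. First I would record the basic invariant of the transformation: for every state~$t$ of $\calM$ we have $\max_\sigma\Prob_t^\sigma(a\Until b)=\max_\sigma\Prob_t^{\sigma,\tilde\calM}(a\Until b)$, and every action enabled in $\tilde\calM$ preserves this quantity exactly (a step~1 action satisfies the defining inequality with equality, since removing strictly suboptimal actions cannot change the optimum). In particular, from a state~$s$ with $p=\max_\sigma\Prob_s^\sigma(a\Until b)$, \emph{every} scheduler in $\tilde\calM$ achieves probability exactly~$p$ of fulfilling $a\Until b$; the relabelling in step~2 is harmless because dead states contribute probability~$0$ anyway, and turning them into $b$-states only affects the semantics of $a\Until_{\le r} b$ on trajectories that were already lost.

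For the inequality $\val_{\tilde\phi}^{\tilde\calM}(s)\le\val_\phi^\calM(s)$: suppose $r:=\val_\phi^\calM(s)<\infty$, so some scheduler~$\sigma$ in $\calM$ witnesses $\Prob_s^\sigma(a\Until_{\le r} b)\ge p$. Because $p$ is already the maximum over \emph{all} schedulers of $\Prob_s^\sigma(a\Until b)\ge\Prob_s^\sigma(a\Until_{\le r} b)$, this forces $\Prob_s^\sigma(a\Until b)=p$ and, moreover, that the event ``$a\Until b$ but not $a\Until_{\le r} b$'' has probability~$0$ under~$\sigma$; hence $\sigma$ is optimal from~$s$ and almost every $a\Until b$-trajectory already fulfils the reward bound. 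One then argues that $\sigma$ (or a modification that switches to an optimal memoryless scheduler whenever it would otherwise leave the ``optimal'' support) is a legal scheduler in $\tilde\calM$ and, viewed there, reaches a $b$-state (in the relabelled sense) within reward~$r$ with probability~$1$: the only trajectories not reaching an original $b$-state within reward~$r$ are ones that are dead or trapped in zero-reward $a$-states, and these are exactly the trajectories the relabelling and the structure of $\tilde\calM$ turn into satisfying ones. So $s\models\tilde\phi[r]$ and hence $\val_{\tilde\phi}^{\tilde\calM}(s)\le r$.

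For the converse $\val_\phi^\calM(s)\le\val_{\tilde\phi}^{\tilde\calM}(s)$: suppose $r:=\val_{\tilde\phi}^{\tilde\calM}(s)<\infty$ is witnessed by a scheduler~$\tilde\sigma$ in $\tilde\calM$ with $\Prob_s^{\tilde\sigma,\tilde\calM}(a\Until_{\le r} b)=1$ (in the relabelled MDP). Pull $\tilde\sigma$ back to a scheduler in $\calM$; it is legal since $\tilde\calM$ only has \emph{fewer} actions. In $\calM$, a trajectory that in $\tilde\calM$ reached a relabelled-$b$ state within reward~$r$ either reaches a genuine $b$-state within reward~$r$, or reaches a state~$u$ with $u\models\P_{=0}(a\Until b)$ within reward~$r$ having only seen $a$-states; the latter set of trajectories, restricted further to those that actually go on to fulfil $a\Until b$, has probability~$0$. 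Therefore $\Prob_s^{\tilde\sigma,\calM}(a\Until_{\le r} b)\ge\Prob_s^{\tilde\sigma,\calM}(a\Until b)=\max_\sigma\Prob_s^\sigma(a\Until b)=p$ (the middle equality because every $\tilde\calM$-action is optimal). Hence $s\models\phi[r]$ and $\val_\phi^\calM(s)\le r$. Finally, the $\infty$ cases match: $\val_\phi^\calM(s)=\infty$ iff no finite reward bound lets some scheduler reach $b$ with probability $\ge p$ from~$s$, which by the same support analysis is iff no scheduler in $\tilde\calM$ reaches a relabelled-$b$ state with probability~$1$ under a finite bound, i.e.\ iff $\val_{\tilde\phi}^{\tilde\calM}(s)=\infty$.

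The main obstacle I expect is the careful bookkeeping around trajectories that are ``eventually unproductive'': paths that get stuck forever in zero-reward $a$-states (the set~$Z$ in the qualitative section) or wander into dead states. One must show precisely that such paths contribute nothing to the probability of $a\Until b$ and are neutralised by the step~2 relabelling, so that the equivalence ``$\Prob=1$ of relabelled reachability in $\tilde\calM$'' $\iff$ ``$\Prob\ge p$ of $a\Until_{\le r} b$ in $\calM$'' really does hold trajectory-by-trajectory up to null sets. This is the technical core and is presumably why the authors defer the full proof to the appendix.
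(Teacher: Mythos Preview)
Your overall strategy---prove $\calM,s\models\phi[r]\iff\tilde\calM,s\models\tilde\phi[r]$ by transporting witnessing schedulers in both directions---is the same as the paper's. However, the justification you give for the $(\Leftarrow)$ direction contains a genuine gap.

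You assert as a ``basic invariant'' that from a state~$s$ with $p=\max_\sigma\Prob_s^\sigma(a\Until b)$, \emph{every} scheduler in~$\tilde\calM$ achieves probability exactly~$p$ of fulfilling $a\Until b$, and you invoke this to obtain $\Prob_s^{\tilde\sigma,\calM}(a\Until b)=p$. This is false: restricting to value-preserving actions does \emph{not} force every scheduler to achieve the optimal reachability probability. Take a state~$s$ labelled~$a$ (not~$b$) with $v(s)=1$ that has both a self-loop and an action leading to a $b$-state; the self-loop preserves the value and is therefore retained in~$\tilde\calM$, yet the scheduler that loops forever achieves $\Prob_s(a\Until b)=0$. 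The paper avoids this trap by switching to the \emph{safety} objective $\neg\Finally D$, where $D=\{t:t\models\AllP_{=0}(a\Until b)\}$: it argues that $p=\max_\tau\Prob_s^\tau(\neg\Finally D)$ and then uses the fact---which \emph{does} hold for safety---that any scheduler using only value-preserving actions attains the optimum. This yields $\Prob_s^{\tilde\sigma}(\Finally D)\le 1-p$, whence $\Prob_s^{\tilde\sigma}(a\Until_{\le r} b)=1-\Prob_s^{\tilde\sigma}(a\Until_{\le r} D)\ge 1-(1-p)=p$.

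Your conclusion is in fact salvageable for the \emph{particular}~$\tilde\sigma$ at hand, since $\Prob_s^{\tilde\sigma}(a\Until_{\le r}(b\vee D))=1$ makes the value process $v(s_n)$ a bounded martingale that a.s.\ hits $\{0,1\}$, and optional stopping then gives $\Prob_s^{\tilde\sigma}(a\Until b)=p$; but that is a different argument from the one you state, and the blanket claim about all $\tilde\calM$-schedulers must be dropped. A smaller imprecision in the forward direction: you say trajectories ``trapped in zero-reward $a$-states'' are ``turned into satisfying ones'' by the relabelling. This is not so unless those states already lie in~$D$; what must be shown (and what the paper argues via the optimality of~$\sigma$) is that under~$\sigma$ such trajectories have probability zero, so that $\Prob_s^\sigma(a\Until_{\le r}(b\vee D))=1$.
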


With the help of
\cref{lemma:conv-max-prob-until-upper,lemma:reduction-max-qual},
we can devise an upper bound for the value of any query
whose value is finite.

\begin{lemma}\label{lemma:value-exists-until-upper}
Let $\calM$ be an MDP
where the denominator of each transition
probability is at most~$m$,
$\phi=\ExP_{\rhd p}(a\Until_{\leq ?} b)$
for ${\rhd}\in\{\geq,>\}$,
$n=\abs{\Labels^{-1}(a)}$,
$c=\max\{\reward(s):s\in\Labels^{-1}(a)\}$,
$s\in\States$, and $q=\max_\sigma\Prob_s^\sigma(a\Until b)$.
Then at least one of the following statements holds:
\begin{enumerate}
\item $p\geq q$ and $\val_\phi(s)=\infty$.
\item $p=q$, ${\rhd}={\geq}$ and $\val_\phi(s)\leq nc$.
\item $p<q$ and $\val_\phi(s)\leq kncm^n$,
where $k=\max\{-\lfloor\ln(q-p)\rfloor,1\}$.
\end{enumerate}
\end{lemma}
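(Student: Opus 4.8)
The plan is to do a case analysis on how the threshold~$p$ compares to the optimal probability $q=\max_\sigma\Prob_s^\sigma(a\Until b)$, matching the three statements of the lemma.

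First I would dispose of the easy case $p>q$, or $p=q$ with ${\rhd}={>}$. Here, since the optimal reward-unbounded probability already fails to satisfy the threshold, no finite reward bound~$r$ can do better: $\max_\sigma\Prob_s^\sigma(a\Until_{\leq r} b)\leq q$, so $s\not\models\phi[r]$ for every $r\in\bbR$, and hence $\val_\phi(s)=\infty$. Together with the degenerate subcase this gives statement~1 whenever $p\geq q$ and we are not in the situation $p=q$, ${\rhd}={\geq}$.

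Next I would handle statement~2, the case $p=q$ and ${\rhd}={\geq}$. This is exactly where \cref{lemma:reduction-max-qual} applies: it tells us that $\val_\phi^\calM(s)=\val_{\tilde\phi}^{\tilde\calM}(s)$ where $\tilde\phi=\ExP_{=1}(a\Until_{\leq ?} b)$ and $\tilde\calM$ is the pruned-and-relabelled MDP. Now I would invoke \cref{prop:bound-qual-until-upper} applied to $\tilde\calM$ and the qualitative query $\tilde\phi$: that proposition bounds the value by $\tilde n\tilde c$, where $\tilde n=\abs{\Labels^{-1}(a)}$ and $\tilde c=\max\{\reward(t):t\in\Labels^{-1}(a)\}$ computed in $\tilde\calM$. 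Since the transformation to $\tilde\calM$ only removes actions and adds $b$-labels (it does not change state rewards, and it can only shrink the set of states still labelled~$a$), we have $\tilde n\leq n$ and $\tilde c\leq c$, so $\val_\phi(s)=\val_{\tilde\phi}^{\tilde\calM}(s)\leq\tilde n\tilde c\leq nc$, giving statement~2. (If $\val_\phi(s)=\infty$ this is vacuous, but in fact one should note that when $p=q$ and ${\rhd}={\geq}$ the value is always finite provided $s\models\phi[\infty]$, which holds since $\Prob_s^\sigma(a\Until b)=q\geq p$ for the optimal~$\sigma$.)

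Finally, statement~3: $p<q$. Here I would apply \cref{lemma:conv-max-prob-until-upper}. Set $k=\max\{-\lfloor\ln(q-p)\rfloor,1\}$, which is a positive natural number, and let $r=kncm^n$ in the notation of that lemma — note the denominator bound~$m$ there enters as $m^{-n}$ inside the formula $r=kncm^{-n}$ in the lemma's statement, but since we are using $m$ as a bound on denominators the quantity $m^{-n}$ is what multiplies, so with $r=kncm^n$ we have $r\geq kncm^{-n}$ in the lemma's sense for $m\geq1$; more carefully I would simply instantiate the lemma with the stated $r$ and observe $r$ is large enough. The lemma then yields $\max_\sigma\Prob_s^\sigma(a\Until_{\leq r} b)>\max_\sigma\Prob_s^\sigma(a\Until b)-\e^{-k}=q-\e^{-k}$. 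By the choice of~$k$ we have $\e^{-k}\leq\e^{\ln(q-p)}=q-p$ whenever $-\lfloor\ln(q-p)\rfloor\geq-\ln(q-p)$, which holds, so $\max_\sigma\Prob_s^\sigma(a\Until_{\leq r} b)>q-(q-p)=p\geq p$. Hence $s\models\ExP_{>p}(a\Until_{\leq r} b)$, a fortiori $s\models\ExP_{\geq p}(a\Until_{\leq r} b)$, so $s\models\phi[r]$ and therefore $\val_\phi(s)\leq r=kncm^n$.

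The main obstacle I expect is bookkeeping in statement~3: one must be careful that the parameters $m,n,c$ of \cref{lemma:conv-max-prob-until-upper} are used consistently with how they appear here (in particular the $m^{-n}$ versus $m^n$ issue, and that the lemma requires $r$ of the exact form $k'ncm^{-n}$), and that the inequality $\e^{-k}\leq q-p$ really follows from the rounding in the definition of~$k$ — this needs the elementary fact that $-\lfloor x\rfloor\geq -x$ together with monotonicity of $\e^{-t}$, and a moment's care that $q-p<1$ so that $\ln(q-p)<0$ and the floor behaves as expected. The other cases are essentially immediate once the right earlier result is cited.
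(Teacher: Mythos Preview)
Your overall architecture matches the paper's proof exactly: the same three-way case split on the relation between $p$ and $q$, the same use of \cref{lemma:reduction-max-qual} together with \cref{prop:bound-qual-until-upper} for the boundary case, and the same application of \cref{lemma:conv-max-prob-until-upper} for $p<q$. Case~1 and Case~3 are fine (your worry about $m^{-n}$ versus $m^n$ is a typo in the statement of \cref{lemma:conv-max-prob-until-upper}; its proof makes clear that $r=kncm^n$ is intended, so your instantiation is the right one).

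There is, however, a genuine error in your treatment of Case~2. Your parenthetical asserts that when $p=q$ and ${\rhd}={\geq}$ the value is always finite because $s\models\phi[\infty]$. This is false: $\val_\phi(s)$ is defined as an infimum over \emph{real} $r$, and $s\models\phi[\infty]$ says nothing about any finite~$r$. A simple counterexample is a two-state Markov chain where $s$ (labelled~$a$, reward~$1$) loops with probability~$\tfrac12$ and moves to a $b$-state with probability~$\tfrac12$; here $q=1$, yet $\Prob_s(a\Until_{\leq r} b)=1-2^{-(r+1)}<1$ for every finite~$r$, so for $p=q=1$ and ${\rhd}={\geq}$ we get $\val_\phi(s)=\infty$. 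Consequently \cref{prop:bound-qual-until-upper}, which explicitly assumes $\val_{\tilde\phi}(s)<\infty$, does not apply, and your argument for statement~2 breaks down in this subcase. The fix is exactly what the paper does: after reducing to $\tilde\calM$ and $\tilde\phi$, split once more on whether $\val_{\tilde\phi}^{\tilde\calM}(s)$ is finite. If it is infinite, observe that $p=q$ entails $p\geq q$, so statement~1 holds; if it is finite, \cref{prop:bound-qual-until-upper} yields the bound~$nc$ and statement~2 holds. (A minor side remark: the transformation to $\tilde\calM$ does not shrink $\Labels^{-1}(a)$ at all---it only adds $b$-labels and prunes actions---so $\tilde n=n$ and $\tilde c=c$ exactly.)
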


\begin{proof}
Clearly, if either ${\rhd}={>}$ and
$p\geq q$ or ${\rhd}={\geq}$ and
$p>q$, then $\val_\phi(s)=\infty$,
and 1.\ holds.
Now assume that $p=q$ and ${\rhd}={\geq}$.
By \cref{lemma:reduction-max-qual}, we have that
$\val^{\calM}_\phi(s)=\val^{\tilde\calM}_{\tilde\phi}(s)$.
Hence, if $\val^{\tilde\calM}_{\tilde\phi}(s)=\infty$,
then 1.\ holds. On the other hand, if
$\val^{\tilde\calM}_{\tilde\phi}(s)<\infty$, then
\cref{prop:bound-qual-until-upper} gives us that
$\val^{\tilde\calM}_{\tilde\phi}(s)\leq nc$, and 2.\ holds.
Finally, if $p<q$, then let $r\coloneqq kncm^n$.
By \cref{lemma:conv-max-prob-until-upper},
we have that
$\max_\sigma\Prob_s(a\Until_{\leq r} b)
>q-e^{-k}
\geq q-e^{\lfloor\ln(q-p)\rfloor}
\geq q-(q-p)
=p$,
\ie $s\models\ExP_{\rhd p}(a\Until_{\leq r} b)$.
Hence, $\val_\phi(s)\leq r$, and 3.\ holds.
\qed
\end{proof}

It follows from \cref{lemma:value-exists-until-upper} that we can compute
the value of a state~$s$ \wrt a query~$\phi$ of the form
$\ExP_{>p}(a\Until_{\leq ?} b)$
as follows: First compute the maximal probability~$q$ of fulfilling
$a\Until b$ from~$s$, which can be done in polynomial time.
If $p\geq q$, we know that the value of~$s$
\wrt~$\phi$ must be infinite. Otherwise, $\val_\phi(s)\leq
r\coloneqq kncm^n$,
where $k=\max\{-\lfloor\ln(q-p)\rfloor,1\}$, and we can find
the least~$i$ such that $\max_\sigma\Prob_s^\sigma(a\Until_{\leq i} b)>p$
by computing $\max_\sigma\Prob_s^\sigma(a\Until_{\leq i} b)$ for all
$i\in\{0,1,\ldots,r\}$, which can be done in time
$\poly(r\cdot\abs{\calM})$.
%by \cref{thm:min-prob-until-upper}.
Since $r$~is exponential in the number of states
of the given MDP~$\calM$, the running time of this algorithm is
exponential in the size of~$\calM$.
If $\phi$~is of the form $\ExP_{\geq p}(a\Until_{\leq ?} b)$,
the algorithm is similar, but in the case that $p=q$, we
compute $\max_\sigma\Prob_s^\sigma(a\Until_{\leq i} b)$ for all
$i\in\{0,1,\ldots,nc\}$ in order to determine whether the value is
infinite or one of these numbers~$i$.

\begin{theorem}
\label{thm:exists-until-upper}
Queries of the form $\ExP_{\geq p}(a\Until_{\leq ?} b)$ or
$\ExP_{>p}(a\Until_{\leq ?} b)$ can be evaluated in exponential time.
\end{theorem}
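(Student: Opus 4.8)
The plan is to turn \cref{lemma:value-exists-until-upper} directly into an algorithm: that lemma already supplies, for each state~$s$, a three-way case split together with an explicit (in one case exponential) upper bound on $\val_\phi(s)$ whenever the value is finite, so the only work left is to make the bound effective and to analyse the running time.

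Fix a state~$s$ and a query $\phi=\ExP_{\rhd p}(a\Until_{\leq ?} b)$ with ${\rhd}\in\{\geq,>\}$. First I would compute $q\coloneqq\max_\sigma\Prob_s^\sigma(a\Until b)$, an ordinary maximal reachability probability, obtainable in polynomial time by graph analysis plus a linear program~\cite{BaierK08}. Then I would branch according to \cref{lemma:value-exists-until-upper}. If ${\rhd}={>}$ and $p\geq q$, or ${\rhd}={\geq}$ and $p>q$, I return $\val_\phi(s)=\infty$. If ${\rhd}={\geq}$ and $p=q$, then by \cref{lemma:reduction-max-qual} the value equals $\val^{\tilde\calM}_{\tilde\phi}(s)$ for $\tilde\phi=\ExP_{=1}(a\Until_{\leq ?} b)$, which by \cref{prop:bound-qual-until-upper} is either $\infty$ or at most $nc$; so I compute $\max_\sigma\Prob_s^\sigma(a\Until_{\leq i} b)$ for $i=0,\dots,nc$ via the layered linear program (size $nc\cdot\abs{\calM}$) and return the least $i$ whose probability is $\geq p$, or $\infty$ if there is none --- this branch stays polynomial. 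Finally, if $p<q$, I set $k=\max\{-\lfloor\ln(q-p)\rfloor,1\}$ and $r=kncm^n$ with $m$ the largest transition-probability denominator; by \cref{lemma:value-exists-until-upper}(3) we have $\val_\phi(s)\leq r$, so I solve the linear program of size $r\cdot\abs{\calM}$ to obtain $\max_\sigma\Prob_s^\sigma(a\Until_{\leq i} b)$ for $i=0,\dots,r$ and return the least $i$ with this probability $\rhd p$. Each ``return the least~$i$'' step is justified by \cref{prop:value}, which lets me read $s\models\phi[i]$ off the comparison; and since $q-p$ and the probabilities $\max_\sigma\Prob_s^\sigma(a\Until_{\leq i} b)$ are rationals of polynomially bounded bit size, the termination test can be performed exactly.

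For the running time I would observe that $k$ is polynomial in $\abs{\calM}$ --- the denominator of $q-p$ is polynomially bounded, so $-\ln(q-p)$ is --- and $n$, $c$ are polynomial, whereas $m^n$ is in general exponential in $\abs{\calM}$; hence $r$ is at most exponential, the linear program has at most exponentially many variables and constraints, solving it and sweeping $i\leq r$ takes $\poly(r\cdot\abs{\calM})$, i.e.\ exponential, time, and the polynomial cases together with the outer loop over all $\abs{\States}$ states contribute only polynomial factors. Granting the two technical lemmas \cref{lemma:conv-max-prob-until-upper,lemma:reduction-max-qual}, which carry the mathematical weight, the one genuine subtlety inside the argument is the boundary case $p=q$ with ${\rhd}={\geq}$: there the convergence bound of \cref{lemma:conv-max-prob-until-upper} is useless, because its error term~$\e^{-k}$ never vanishes, so one is forced to detour through the qualitative query on~$\tilde\calM$ and the polynomial bound~$nc$; everything else is bookkeeping about bit sizes and the exactness of the stopping test.
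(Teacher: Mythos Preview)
Your proposal is correct and follows essentially the same route as the paper: compute $q$, branch on the three cases of \cref{lemma:value-exists-until-upper}, and in each non-trivial case enumerate $i$ up to the supplied bound while solving the layered linear program. One small slip: the $p=q$ branch is not polynomial in general, since $c$ is encoded in binary and $nc$ can be exponential in $\abs{\calM}$, but this does not affect the exponential-time claim.
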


\subsection{Universal queries}
\label{sect:forall-until-upper}

In order to solve queries of the form
$\AllP_{>p}(a\Until_{\leq ?} b)$,
we first show how to compute the \emph{minimal} probabilities for fulfilling
the path formula $a\Until_{\leq r} b$ when we are given the reward bound~$r$.
Given an MDP~$\calM$, $a,b\in\AP$ and $r\in\bbN$, consider the following linear
program over the variables $x_{s,i}$ for $s\in\States$ and
$i\in\{0,1,\ldots,r\}$:
\begin{gather*}
\textstyle
\text{Maximise $\sum x_{s,i}$ subject to} \\
\begin{alignedat}{2}
x_{s,i} &\leq 1 &\quad\quad\quad\quad\quad\quad
 & \text{for all $s\in\States$ and $i\leq r$,} \\
x_{s,i} &= 0 && \text{for all $s\in\States$ with
 $s\not\models\AllP_{>0}(a\Until_{\leq i} b)$ and $i\leq r$,} \\
x_{s,i}
 &\leq\rlap{$\sum_{t\in\States}\delta(s,\alpha,t)\cdot x_{t,i-\reward(s)}$} \\
&&& \text{for all $s\in\States\setminus\Labels^{-1}(b)$, $\alpha\in\Actions$
 and $\reward(s)\leq i\leq r$.}
\end{alignedat}
\end{gather*}
This program is of size $r\cdot\abs{\calM}$, and it can be shown that
setting $x_{i,s}$ to $\min_{\sigma}\Prob_s^{\sigma}({a\Until_{\leq i} b})$
yields the optimal solution.
Since the set of states~$s$ with $s\models\AllP_{>0}(a\Until_{\leq i} b)$
can be
computed in polynomial time (\cref{thm:qual-until-upper}), this means
that we can compute the numbers
$\min_{\sigma}\Prob_s^{\sigma}({a\Until_{\leq i} b})$
in time $\poly(r\cdot\abs{\calM})$.
The following lemma is the analogue of
\cref{lemma:conv-max-prob-until-upper} for minimal
probabilities.

\begin{lemma}\label{lemma:conv-min-prob-until-upper}
Let $\calM$ be an MDP where the denominator of each
transition probability is at most~$m$,
and let $n=\abs{\Labels^{-1}(a)}$,
$c=\max\{\reward(s):s\in\Labels^{-1}(a)\}$
and $r=k n c m^{-n}$ for some $k\in\bbN^+\!$.
Then
$\min_\sigma\Prob_s^\sigma(a\Until b)
<\min_\sigma\Prob_s^\sigma(a\Until_{\leq r} b)+\e^{-k}$
for all $s\in\States$.
\end{lemma}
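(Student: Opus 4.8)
The plan is to mirror the proof of \cref{lemma:conv-max-prob-until-upper}, replacing the maximising scheduler by a minimising one and being careful about which events to intersect with. First I would assume without loss of generality that all $b$-labelled states are absorbing, and fix a memoryless scheduler~$\tau$ that \emph{minimises} $\Prob_s^\sigma(a\Until b)$ simultaneously for all states~$s$ (such a scheduler exists by \cref{prop:optimal-strat}). Call a state~$s$ \emph{dead} if $s\models\AllP_{=0}(a\Until b)$ — equivalently, if under every scheduler the probability of $a\Until b$ is zero — and let $D$ be the set of dead states; note $D$ contains every state with $s\models\neg a\wedge\neg b$. Let $Z$ be the set of states with $s\models a\wedge\neg b$ and $\reward(s)=0$. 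The key combinatorial observation is that for any non-dead state~$s$, even under the minimising scheduler~$\tau$ there must be a simple path from~$s$ to a $b$-labelled state via $a$-labelled states in the Markov chain induced by~$\tau$: otherwise every path from~$s$ under $\tau$ would avoid~$b$, contradicting that $\Prob_s^\tau(a\Until b)=\min_\sigma\Prob_s^\sigma(a\Until b)$ — wait, here I must be slightly more careful than in the maximising case, since a minimising scheduler might legitimately drive the probability to zero; but then $s$ would be dead. So for non-dead~$s$ we indeed get a simple path of length at most~$n$ through $a$-states of reward at most~$c$, whence $\Prob_s^\tau(a\Until_{\leq nc} b)\geq m^{-n}$.

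With $\psi$ denoting the path formula $b\vee D\vee\Globally Z$, I would then prove the conditional bound $\Prob_s^\tau(a\Until_{\leq i+nc}\psi\mid\neg(a\Until_{\leq i}\psi))\geq m^{-n}$ for every $i\in\bbN$ with $\Prob_s^\tau(a\Until_{\leq i}\psi)<1$, by exactly the partition argument used in \cref{lemma:conv-max-prob-until-upper}: the infinite sequences violating $a\Until_{\leq i}\psi$ are partitioned into cylinders $xt\cdot\States^\omega$ where $xt$ stays in $\{s\in\States\setminus D:s\models a\wedge\neg b\}$, has $\reward(x)\leq i$ and $\reward(xt)>i$, and using memorylessness of~$\tau$ the conditional probability of reaching~$b$ within $nc$ further reward is a convex combination of the numbers $\Prob_t^\tau(a\Until_{\leq nc} b)\geq m^{-n}$. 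Iterating this $km^n = r/(nc)$ times yields $\Prob_s^\tau(\neg(a\Until_{\leq r}\psi))\leq(1-m^{-n})^{km^n}<\e^{-k}$.

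Finally, I would close the argument with the same telescoping estimate: since $\tau$~realises the minimum,
\begin{align*}
\min\nolimits_\sigma\Prob_s^\sigma(a\Until b)
&=\Prob_s^\tau(a\Until b) \\
&=\Prob_s^\tau(a\Until b\wedge\neg(a\Until_{\leq r}(D\vee\Globally Z))) \\
&\leq\Prob_s^\tau(\neg(a\Until_{\leq r}\psi))+\Prob_s^\tau(a\Until_{\leq r} b) \\
&<\e^{-k}+\Prob_s^\tau(a\Until_{\leq r} b),
\end{align*}
and it only remains to observe that $\Prob_s^\tau(a\Until_{\leq r} b)\ge\min_\sigma\Prob_s^\sigma(a\Until_{\leq r} b)$, which holds trivially since $\tau$ is \emph{some} scheduler. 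This last step is in fact where the minimising case is \emph{easier} than the maximising one: there, the analogous step needed $\Prob_s^\tau(a\Until_{\leq r} b)\le\max_\sigma\Prob_s^\sigma(\cdot)$, also free, so both directions go through.

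The main obstacle I anticipate is handling states of reward~$0$ correctly — the set~$Z$ and the ``$\Globally Z$'' disjunct in~$\psi$ are there precisely because a scheduler can keep a path in reward-$0$ $a$-states forever, so such a path never violates $a\Until_{\leq i}\psi$ and is not covered by the cylinders in~$X$; one has to check that the partition of the complement of $a\Until_{\leq i}\psi$ is genuinely exhausted by the $X$-cylinders together with the $\Globally Z$ paths, and that dead states behave as absorbing sinks for the purposes of the bound. Everything else is a routine adaptation of the already-given proof.
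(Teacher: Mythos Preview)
Your proposal contains two genuine gaps, both stemming from an asymmetry between the minimising and maximising cases that the direct mirroring misses.

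First, your definition of the set~$D$ is wrong for this direction. You set $D=\{s:s\models\AllP_{=0}(a\Until b)\}$ and then argue that if the minimising scheduler~$\tau$ drives the probability to zero from~$s$, ``then $s$ would be dead.'' But that inference fails: $\Prob_s^\tau(a\Until b)=0$ only says $\min_\sigma\Prob_s^\sigma(a\Until b)=0$, \ie $s\models\ExP_{=0}(a\Until b)$, not $\AllP_{=0}$. A state with one action leading to~$b$ and another to a $\neg a\wedge\neg b$ sink is non-dead in your sense, yet the minimiser~$\tau$ picks the second action and no path to~$b$ exists under~$\tau$; your bound $\Prob_s^\tau(a\Until_{\leq nc} b)\geq m^{-n}$ is then false. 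The paper instead takes $D=\{s:s\models\ExP_{=0}(a\Until b)\}$ (``dull'' rather than ``dead''), for which non-membership means positive probability under \emph{every} scheduler, yielding the uniform bound $\Prob_s^\sigma(a\Until_{\leq nc} b)\geq m^{-n}$ for all~$\sigma$.

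Second, and more structurally, your final step has the inequality the wrong way round. You end with $\min_\sigma\Prob_s^\sigma(a\Until b)=\Prob_s^\tau(a\Until b)<\e^{-k}+\Prob_s^\tau(a\Until_{\leq r} b)$ and want to replace the last term by $\min_\sigma\Prob_s^\sigma(a\Until_{\leq r} b)$. For that you would need $\Prob_s^\tau(a\Until_{\leq r} b)\leq\min_\sigma\Prob_s^\sigma(a\Until_{\leq r} b)$, \ie that $\tau$ also minimises the \emph{bounded} probability---which is not given, since you chose $\tau$ to minimise the unbounded one. The trivial inequality $\Prob_s^\tau(\cdot)\geq\min_\sigma(\cdot)$ that you invoke points the wrong way. (In the maximising lemma the analogous step needs $\Prob_s^\tau(\cdot)\leq\max_\sigma(\cdot)$, which \emph{is} free; this is precisely where the two cases are not symmetric.) The paper resolves this by fixing $\tau$ to minimise $\Prob_s^\sigma(a\Until_{\leq r} b)$ instead, so the last step is an equality, and then handles the \emph{first} step by constructing a new scheduler $\tau^*$ that follows $\tau$ until $D$ is hit and thereafter switches to a scheduler witnessing $\Prob^\sigma_t(a\Until b)=0$ for $t\in D$; then $\min_\sigma\Prob_s^\sigma(a\Until b)\leq\Prob_s^{\tau^*}\!(a\Until b)$ is the trivial direction, and the chain closes. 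The uniform (all-scheduler) version of the $m^{-n}$ bound is needed here because the $\e^{-k}$ estimate must apply to the bounded-minimiser~$\tau$, which is no longer the unbounded-minimiser.
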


\begin{proof}
Without loss of generality, assume that all $b$-labelled states
are absorbing.
Let us call a state~$s$ of~$\calM$ \emph{dull} if
$s\models\ExP_{=0}(a\Until b)$, and denote by~$D$ the set of
dull states. Note that $s\in D$ for all states~$s$
with $s\models\neg a\wedge\neg b$.
If $s$~is not dull, then it is easy to see that,
for any scheduler~$\sigma$, the probability of reaching a $b$-labelled state
from~$s$ in \emph{at most $n$~steps} (while seeing only $a$-labelled states
before reaching a $b$-labelled state) is at least~$m^{-n}$.
Since any $a$-labelled state has reward at most~$c$, we get that
$\Prob_s^\sigma(a\Until_{\leq nc} b)\geq m^{-n}$
for all non-dull states~$s$ and all schedulers~$\sigma$.
In the following, denote by~$Z$ the set
$\{s\in\States:\text{$s\models a\wedge\neg b$ and $\reward(s)=0$}\}$,
and let $\psi$ be the path formula $b\vee D\vee\Globally Z$.
In the same way as in the proof of
\cref{lemma:conv-max-prob-until-upper}, we can infer that
$\Prob_s^\sigma(\neg(a\Until_{\leq r}\psi))<\e^{-k}$ for
all states~$s$ and all schedulers~$\sigma$.
Now fix a scheduler~$\tau$ that minimises
$\Prob_s^\tau(a\Until_{\leq r} b)$ for all $s\in S$
and a scheduler~$\sigma$
such that $\Prob_s^\sigma(a\Until b)=0$ for all $s\in D$.
From $\tau$ and~$\sigma$, we devise another scheduler~$\tau^*$
by setting
\[
\tau^*(x)=\begin{cases}
\tau(x) & \text{if $x\in(\States\setminus D)^*$,} \\
\sigma(x_2) & \text{if $x=x_1\cdot x_2$ where
$x_1\in(\States\setminus D)^*$ and $x_2\in D\cdot\States^*$.}
\end{cases}
\]
Note that
$\Prob_s^{\tau^*}\!(a\Until_{\leq r}(D\vee\Globally Z)\wedge a\Until b)=0$
and $\Prob_s^{\tau^*}\!(a\Until_{\leq r}(D\vee\Globally Z))=
\Prob_s^\tau(a\Until_{\leq r}(D\vee\Globally Z))$
for all $s\in\States$.
%Moreover, the optimality of~$\tau$ implies that
%$\Prob_s^\tau(a\Until_{\leq r} b\wedge a\Until_{\leq r}(D\vee\Globally Z))=0$
%for all $s\in\States$.
Hence,\goodbreak
\begin{align*}
\Prob_s^{\tau^*}\!(a\Until b)
&=\Prob_s^{\tau^*}\!(a\Until b\wedge\neg(a\Until_{\leq r} (D\vee\Globally Z))) \\
&\leq\Prob_s^{\tau^*}\!(\neg(a\Until_{\leq r}(D\vee\Globally Z))) \\
&=\Prob_s^\tau(\neg(a\Until_{\leq r}(D\vee\Globally Z))) \\
&\leq\Prob_s^\tau(\neg(a\Until_{\leq r}\psi)\vee(a\Until_{\leq r}b)) \\
&\leq\Prob_s^\tau(\neg(a\Until_{\leq r}\psi))+
\Prob_s^\tau(a\Until_{\leq r}b) \\
&<e^{-k}+\Prob_s^\tau(a\Until_{\leq r}b) \\
&=e^{-k}+\min\nolimits_\sigma\Prob_s^\sigma(a\Until_{\leq r}b)
\end{align*}
for all $s\in\States$. Since
$\min_\sigma\Prob_s^\sigma(a\Until b)\leq\Prob_s^{\tau^*}\!(a\Until b)$,
this inequality proves the lemma.
\qed
\end{proof}

With the help of \cref{lemma:conv-min-prob-until-upper}, we can devise
an upper bound for the value of a query of the
form $\AllP_{>p}(a\Until_{\leq ?} b)$
in case this value is finite.

\begin{lemma}\label{lemma:value-forall-until-upper}
Let $\calM$ be an MDP
where the denominator of each transition
probability is $\leq m$,
$\phi=\AllP_{>p}(a\Until_{\leq ?} b)$,
%for ${\rhd}\in\{\geq,>\}$,
$n=\abs{\Labels^{-1}(a)}$,
$c=\max\{\reward(s):s\in\Labels^{-1}(a)\}$,
$s\in\States$, and $q=\min_\sigma\Prob_s^\sigma(a\Until b)$.
Then one of the following statements holds:
\begin{enumerate}
\item $p\geq q$ and $\val_\phi(s)=\infty$.
\item $p<q$ and $\val_\phi(s)\leq kncm^n$,
where $k=\max\{-\lfloor\ln(q-p)\rfloor,1\}$.
\end{enumerate}
\end{lemma}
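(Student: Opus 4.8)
The plan is to follow the structure of the first two cases of the proof of \cref{lemma:value-exists-until-upper}, but using minimal instead of maximal probabilities and invoking \cref{lemma:conv-min-prob-until-upper} in place of \cref{lemma:conv-max-prob-until-upper}. A pleasant simplification is that no counterpart of the qualitative-reduction step (\cref{lemma:reduction-max-qual}) is needed here: since the threshold in $\phi=\AllP_{>p}(a\Until_{\leq ?} b)$ is strict, the borderline situation $p=q$ cannot make $\val_\phi(s)$ finite, so it already falls under statement~1.

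First I would treat the case $p\geq q$. For every reward bound $r\in\bbN$ and every scheduler~$\sigma$ one has $\Prob_s^\sigma(a\Until_{\leq r} b)\leq\Prob_s^\sigma(a\Until b)$, and taking the infimum over~$\sigma$ on both sides gives $\min_\sigma\Prob_s^\sigma(a\Until_{\leq r} b)\leq q\leq p$; hence $s\not\models\phi[r]$ for every~$r$, and since $\phi$~is a minimising query this forces $\val_\phi(s)=\inf\emptyset=\infty$, so statement~1 holds. For the case $p<q$ I would set $r\coloneqq kncm^n$ with $k=\max\{-\lfloor\ln(q-p)\rfloor,1\}\in\bbN^+$ and apply \cref{lemma:conv-min-prob-until-upper} to this~$k$, obtaining $\min_\sigma\Prob_s^\sigma(a\Until_{\leq r} b)>q-\e^{-k}$. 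Since $k\geq-\lfloor\ln(q-p)\rfloor$ implies $-k\leq\lfloor\ln(q-p)\rfloor\leq\ln(q-p)$ and hence $\e^{-k}\leq q-p$, this gives $\min_\sigma\Prob_s^\sigma(a\Until_{\leq r} b)>q-(q-p)=p$, so $s\models\phi[r]$ and therefore $\val_\phi(s)\leq r=kncm^n$, which is statement~2.

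The only place asking for a bit of attention is the arithmetic around the floor in the definition of~$k$, and in particular checking that $\e^{-k}\leq q-p$ survives the boundary case $q-p=1$ (where $k=1$ and $\e^{-1}<1$) as well as the case where $q-p$ is close to~$1$ from below. Beyond that the lemma is essentially a corollary: the substantive work — bounding how fast the reward-bounded minimal probabilities converge to $\min_\sigma\Prob_s^\sigma(a\Until b)$ — is already carried out in \cref{lemma:conv-min-prob-until-upper}, whose proof is obtained from the proof of \cref{lemma:conv-max-prob-until-upper} with only cosmetic changes, together with the elementary observation that a strict-threshold universal query can never be satisfied once $p=q$.
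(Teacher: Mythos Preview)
Your proposal is correct and follows essentially the same approach as the paper: the paper's proof handles $p\geq q$ with a one-line ``clearly $\val_\phi(s)=\infty$'' and then, for $p<q$, sets $r=kncm^n$, applies \cref{lemma:conv-min-prob-until-upper}, and chains the inequalities $\min_\sigma\Prob_s^\sigma(a\Until_{\leq r} b)>q-\e^{-k}\geq q-\e^{\lfloor\ln(q-p)\rfloor}\geq q-(q-p)=p$. Your version simply spells out the first case and the floor arithmetic in more detail, and your remark that no analogue of \cref{lemma:reduction-max-qual} is needed because the threshold is strict is exactly the reason the paper's lemma has only two cases rather than three.
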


\begin{proof}
Clearly, if $p\geq q$, then $\val_\phi(s)=\infty$,
and 1.\ holds.
On the other hand, if $p<q$, then let $r\coloneqq kncm^n$.
By \cref{lemma:conv-min-prob-until-upper},
we have that
$\min_\sigma\Prob_s(a\Until_{\leq r} b)
>q-e^{-k}
\geq q-e^{\lfloor\ln(q-p)\rfloor}
\geq q-(q-p)
=p$,
\ie $s\models\AllP_{>p}(a\Until_{\leq r} b)$.
Hence, $\val_\phi(s)\leq r$, and 3.\ holds.
\qed
\end{proof}

\iffalse
It follows from \cref{lemma:value-forall-until-upper} that we can compute
the value of a state~$s$ \wrt a query~$\phi$ of the form
$\AllP_{>p}(a\Until_{\leq ?} b)$
as follows: First compute the minimal probability~$q$ of fulfilling
$a\Until b$ from~$s$, which can be done in polynomial time.
If $p\geq q$, we know that the value of~$s$
\wrt~$\phi$ must be infinite. Otherwise, $\val_\phi(s)\leq r\coloneqq
kncm^n$,
where $k=\max\{-\lfloor\ln(q-p)\rfloor,1\}$, and we can find
the least~$i$ such that $\min_\sigma\Prob_s^\sigma(a\Until_{\leq i} b)>p$
by computing $\min_\sigma\Prob_s^\sigma(a\Until_{\leq i} b)$ for all
$i\in\{0,1,\ldots,r\}$, which can be done in time $\poly(r\cdot\abs{\calM})$.
%by \cref{thm:min-prob-until-upper}.
Since $k$~is exponential in the number of states
of the given MDP~$\calM$, the running time of this algorithm is exponential
in the size of~$\calM$.
If $\phi$~is of the form $\AllP_{\geq p}(a\Until_{\leq ?} b)$,
the algorithm is similar, but in the case that $p=q$, we
have to compute $\min_\sigma\Prob_s^\sigma(a\Until_{\leq i} b)$ for all
$i\in\{0,1,\ldots,nc\}$ in order to determine whether the value is infinite
or one of these numbers~$i$.
\else
As in the last section, \cref{lemma:value-forall-until-upper}
can be used to derive an exponential
algorithm for computing the value of a state \wrt a query of the form
$\AllP_{>p}(a\Until_{\leq ?} b)$.
\fi

\begin{theorem}
\label{thm:forall-until-upper}
Queries of the form $\AllP_{>p}(a\Until_{\leq ?} b)$
can be evaluated in exponential time.
\end{theorem}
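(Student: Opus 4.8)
The plan is to follow the template already used for the existential case in \cref{sect:exists-until-upper}, with the maximal-probability linear program replaced by the minimal-probability one stated at the beginning of this subsection, and \cref{lemma:value-exists-until-upper} replaced by \cref{lemma:value-forall-until-upper}. Fix an MDP~$\calM$ with natural state rewards (using the normalisation described at the start of \cref{section:quantitative}) and a query $\phi=\AllP_{>p}(a\Until_{\leq ?} b)$; let $n=\abs{\Labels^{-1}(a)}$, $c=\max\{\reward(s):s\in\Labels^{-1}(a)\}$, and let $m$ be an upper bound on the denominators of the transition probabilities. The goal is to compute $\val_\phi(s)$ for every state~$s$.

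First I would compute, for every state~$s$, the quantity $q_s\coloneqq\min_\sigma\Prob_s^\sigma(a\Until b)$; this is a standard minimal (constrained) reachability computation in MDPs and can be carried out in polynomial time. By \cref{lemma:value-forall-until-upper}, for each~$s$ exactly one of two situations occurs: either $p\geq q_s$, in which case $\val_\phi(s)=\infty$ and nothing more is needed, or $p<q_s$, in which case $\val_\phi(s)\leq r_s\coloneqq k_s\,n\,c\,m^n$ with $k_s=\max\{-\lfloor\ln(q_s-p)\rfloor,1\}$.

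In the second situation I would solve the linear program of this subsection for the bound $r_s$ — it has $r_s\cdot\abs{\calM}$ variables and constraints, and, using \cref{thm:qual-until-upper} to identify in polynomial time the states satisfying $\AllP_{>0}(a\Until_{\leq i} b)$, it can be solved in time $\poly(r_s\cdot\abs{\calM})$ — thereby obtaining $\min_\sigma\Prob_s^\sigma(a\Until_{\leq i} b)$ for all $i\in\{0,1,\dots,r_s\}$, and then return the least index~$i$ with $\min_\sigma\Prob_s^\sigma(a\Until_{\leq i} b)>p$. Correctness follows from \cref{prop:value}: since all rewards are natural numbers, $\val_\phi(s)\in\bbN$ whenever it is finite, $s\models\phi[i]$ holds iff $\min_\sigma\Prob_s^\sigma(a\Until_{\leq i} b)>p$ (extremal probabilities being attained by \cref{prop:optimal-strat}), and $\phi$ is a minimising query, so $\val_\phi(s)$ is precisely this least index; moreover the search is guaranteed to succeed no later than $i=r_s$ by the bound just established.

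For the running time, the rational $q_s-p$ has bit length polynomial in $\abs{\calM}$ and the size of the query, so $k_s$ is polynomially bounded, but $m^n$ — and hence $r_s$ — is exponential in the number of states, so the linear program has exponentially many variables and the whole procedure runs in exponential time. There is no deep obstacle here: the proof is essentially an assembly of results already established. The two places that warrant a little care are the integer-reward reduction together with the fact (from the proof of \cref{prop:value}) that a finite value is attained at an integer, and the claim — asserted but not proved in the excerpt — that the displayed linear program really has $\bigl(\min_\sigma\Prob_s^\sigma(a\Until_{\leq i} b)\bigr)_{s,i}$ as its optimal solution; the latter is proved exactly as the standard linear-programming characterisation of minimal reachability probabilities in MDPs, by checking feasibility of the vector of minimal probabilities, optimality via a monotonicity/fixed-point argument layer by layer over $i=0,1,\dots,r_s$, and uniqueness of the optimum on the states that reach~$b$ with positive probability within the bound.
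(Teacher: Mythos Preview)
Your proposal is correct and follows essentially the same approach as the paper: compute $q=\min_\sigma\Prob_s^\sigma(a\Until b)$, declare the value infinite when $p\geq q$, and otherwise use the bound from \cref{lemma:value-forall-until-upper} together with the minimal-probability linear program to search for the least~$i$ with $\min_\sigma\Prob_s^\sigma(a\Until_{\leq i} b)>p$. The paper itself merely points to the existential case as a template and invokes \cref{lemma:value-forall-until-upper}; your additional remarks on why $k_s$ is polynomially bounded and on the correctness of the LP are welcome elaborations but not departures from the intended argument.
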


Regarding queries of the form $\AllP_{\geq p}(a\Until_{\leq ?} b)$,
we can compute the value of a state~$s$ whenever
the probability $\min_\sigma\Prob_s^\sigma(a\Until b)$ differs
from~$p$ using the same algorithm.
However, in the case that
$p=\min_\sigma\Prob_s^\sigma(a\Until b)$
it is not clear how to bound the value of~$s$.
As the following example shows, the analogous bound
of~$nc$ for existential queries from
\cref{lemma:value-exists-until-upper}
does not apply in this case.

\begin{example}
Consider the MDP depicted in \cref{fig:example},
where $\Actions=\{\flat,\natural\}$ and
$q\in{[0,1[}$ is an arbitrary probability.
\begin{figure}
\begin{tikzpicture}[->,x=1.5cm,y=1.5cm]
\node[state] (0) at (0,0) {$s_0$\nodepart{lower}$0$};
\node[state] (1) at (-1.5,0) {$s_1$\nodepart{lower}$0$};
\node[state] (2) at (-2.25,1) {$s_2$\nodepart{lower}$0$};
\node[state] (3) at (-3,0) {$s_3$\nodepart{lower}$0$};
\node[state] (4) at (1.5,0) {$s_4$\nodepart{lower}$1$};
\node[state] (5) at (3,0) {$s_5$\nodepart{lower}$0$};

\draw (0) to node[above] {$\flat, 1$} (1);
\draw (0) to node[above] {$\natural, 1$} (4);
\draw (1) to node[above right] {$\flat, \frac{1}{2}$} (2);
\draw (1) to node[above] {$\flat, \frac{1}{2}$} (3);
\draw[loop,out=45,in=135,looseness=6] (4) to node[above] {$\natural,q$} (4);
\draw (4) to node[above] {$\natural,1-q$} (5);
\end{tikzpicture}
\caption{\label{fig:example}An MDP with nonnegative rewards.}
\end{figure}
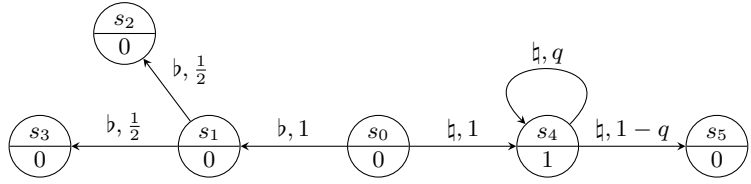
A state's reward is depicted in its bottom half, and a transition
from $s$ to~$t$ labelled with $\alpha,p$ indicates that
$\delta(s,\alpha,t)=p$. Only transitions from non-absorbing
states with nonzero probability and corresponding to enabled actions 
are shown. Assuming that every state is labelled with~$a$ but only
$s_3$ and~$s_5$ are labelled with~$b$, it is easy to see that
$\min_\sigma\Prob_{s_0}^\sigma(a\Until b)=\frac{1}{2}$.
Moreover, a quick calculation reveals
that the value of state~$s_0$ with respect to the query
$\AllP_{\geq 1/2}(a\Until_{\leq ?} b)$ equals
$-\lfloor 1/\log_2 q\rfloor$.
Since $q$~can be chosen arbitrarily close to~$1$,
this value can be made arbitrarily high.
\end{example}

\iffalse

\subsection{Queries of the form $\P_{\geq p}(a\Until_{>?} b)$
or $\P_{>p}(a\Until_{>?} b)$}
\label{sect:forall-until-lower}

In order to solve queries of the form $\P_{\geq p}(a\Until_{>?} b)$ or
$\P_{>p}(a\Until_{>?} b)$,
we first show how to compute the \emph{minimal} probabilities for fulfilling
the path formula $a\Until_{>r} b$ when we are given the reward bound~$r$.
Given an MDP~$\calM$, $a,b\in\AP$ and $r\in\bbN$, consider the following linear
program over the variables $x_{s,i}$ for $s\in\States$ and
$i\in\{-\infty,0,1,\ldots,r\}$:
\begin{gather*}
\textstyle
\text{Maximise $\sum x_{s,i}$ subject to} \\
\begin{alignedat}{2}
x_{s,i} &\leq 1 &\quad\quad\quad\quad\quad\quad\quad\quad
 & \text{for all $s\in\States$ and $i\leq r$,} \\
x_{s,i} &= 0 && \text{for all $s\in\States$ with
 $s\not\models\P_{>0}(a\Until_{>i} b)$ and $i\leq r$,} \\
x_{s,i}
 &\leq\rlap{$\sum_{t\in\States}\delta(s,\alpha,t)\cdot x_{t,i\dotminus\reward(s)}$} \\
&&& \text{for all $s\in\States$, $\alpha\in\Actions$
and $i\leq r$.}
\end{alignedat}
\end{gather*}
Here, $i\dotminus j$ equals $i-j$ if $j\leq i$ and $-\infty$ otherwise.
This program is of size $r\cdot\abs{\calM}$, and it can be shown that
setting $x_{i,s}$ to $\min_{\sigma}\Prob_s^{\sigma}({a\Until_{\leq i} b})$
yields the optimal solution.
Since the set of states~$s$ with $s\models\P_{>0}(a\Until_{\leq i} b)$ can be
computed in polynomial time (\cref{thm:qual-until-upper}), this means
that we can compute the numbers
$\min_{\sigma}\Prob_s^{\sigma}({a\Until_{\leq i} b})$
in time $\poly(r,\abs{\calM})$.

\fi

\subsection{A pseudo-polynomial algorithm for Markov chains}

In this section, we give a \emph{pseudo-polynomial}
algorithm for evaluating quantile queries of
the form $\P_{\rhd\mkern 1mu p}(a\Until_{\leq ?} b)$
on Markov chains. (Note that the quantifiers $\ExP$ and
$\AllP$ coincide for Markov chains.)
More precisely, our algorithm runs in
time $\poly(c\cdot\abs{\calM}\cdot\size{p})$ if
$c$~is the largest reward in~$\calM$.
As an important
special case, our algorithm runs in polynomial time on
Markov chains where each state has reward $0$ or~$1$.

Our polynomial-time algorithm relies on the following equations
for computing the probability of the event $a\Until_{=i} b$ in
a Markov chain with rewards $0$ and~$1$.
Given such a Markov chain~$\calM$ and $a\in\AP$, we denote by~$Z$
the set of states~$s$ such that $\reward(s)=0$ and
$s\models a\wedge\neg b$.
Then the following equations hold for all $s\in\States$,
$a,b\in\AP$ and $r\in\bbN$:
\begin{itemize}
\item
$\Prob_s(a\Until_{=0} b)=\Prob_s(Z\Until b)$,
\item
$\Prob_s(a\Until_{=2r} b)=\sum_{t\in\States\setminus Z}
 \Prob_s(a\Until_{=r}\{t\})\cdot\Prob_t(a\Until_{=r} b)$,
\item
$\Prob_s(a\Until_{=2r+1} b)=\sum_{t\in\Labels^{-1}(a)\setminus Z}
 \sum_{u\in\States}
 \Prob_s(a\Until_{=r}\{t\})\cdot\delta(t,u)
 \cdot\Prob_u(a\Until_{=r} b)$,
\end{itemize}
Using these equations, we can compute the numbers
$\Prob_s(a\Until_{=r} b)$ along the binary representation
of~$r$ in time $\Oh(\poly(\abs{\calM})\cdot\log r)$ for
Markov chains with rewards $0$ and~$1$ (see also \cite{HanJon94}).
Since any Markov chain~$\calM$ with rewards $0,1,\ldots,c$
can easily be transformed into an equivalent Markov chain of
size $c\cdot\abs{\calM}$ with rewards $0$ and~$1$, the same numbers
can be computed in time
$\Oh(\poly(c\cdot\abs{\calM})\cdot\log r)$
for general Markov chains.
Finally, we can compute the numbers
$\Prob_s(a\Until_{\leq r} b)$ in the same time by first
applying the following operations to each $b$-labelled
state~$s$: Make $s$ absorbing, add~$a$ to $\Labels(s)$,
and set $\reward(s)=1$; in the resulting Markov chain
each state~$s$ fulfils
$\Prob_s(a\Until_{\leq r} b)=\Prob_s(a\Until_{=r} b)$.

Now let $\phi=\P_{\rhd\mkern 1mu p}(a\Until_{\leq ?} b)$.
Our algorithm for evaluating~$\phi$ at state~$s$ of
a Markov chain~$\calM$ is
essentially the same algorithm as for MDPs. Hence, we first
compute the probability $q\coloneqq\Prob_s(a\Until b)$.
If either $p>q$ or $p=q$ and ${\rhd}={>}$, then
$\val_\phi(s)=\infty$, by \cref{lemma:value-exists-until-upper}.
If $p<q$, then the same lemma entails that
$\val_\phi(s)\leq r\coloneqq kncm^n$, where
$n=\abs{\Labels^{-1}(a)}$, $m$ is the least
denominator of any transition probability, and
$k=\max\{-\lfloor\ln(q-p)\rfloor,1\}\leq\poly(\calM)+\size{p}$.
Hence, we can determine $\val_\phi(s)$ using an ordinary binary
search in time
$\Oh(\poly(c\cdot\abs{\calM})\cdot\log^2 r)=
\Oh(\poly(c\cdot\abs{\calM}\cdot\size{p}))$.
Finally, the same method can be applied if $p=q$ and ${\rhd}={\geq}$
since \cref{lemma:value-exists-until-upper} tells us that
$\val_\phi(s)\leq nc$ in this case.

\begin{theorem}
\label{thm:mc-until-upper}
Queries of the form $\P_{\geq p}(a\Until_{\leq ?} b)$ or
$\P_{>p}(a\Until_{\leq ?} b)$ can be evaluated in pseudo-polynomial
time on Markov chains.
\end{theorem}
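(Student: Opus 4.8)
The plan is to reuse, almost verbatim, the exponential-time algorithm developed for MDPs in \cref{sect:exists-until-upper,sect:forall-until-upper}, but to replace its innermost step by a much faster subroutine that exploits the structure of Markov chains. Since $\ExP$ and $\AllP$ coincide on Markov chains, \cref{lemma:value-exists-until-upper} already supplies the crucial a~priori bound on $\val_\phi(s)$ whenever that value is finite, so the only thing that needs new work is (i) computing $\Prob_s(a\Until_{\leq i} b)$ quickly and (ii) observing that the relevant bound has \emph{logarithm} polynomial in the input size, so that a binary search over $i$ has only polynomially many rounds.

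For the subroutine I would first reduce to rewards in $\{0,1\}$: a state of reward $k$ is expanded into a path of $k$ reward-$1$ copies followed by a reward-$0$ state that carries the original outgoing transitions, yielding an equivalent chain of size $\Oh(c\cdot\abs{\calM})$ that preserves $\Prob_s(a\Until_{\leq r} b)$ for every~$r$. On such a chain one computes $\Prob_s(a\Until_{=r} b)$ for all states $s$ simultaneously by recursion along the binary representation of~$r$, using the three displayed identities (the base case $\Prob_s(a\Until_{=0} b)=\Prob_s(Z\Until b)$ and the doubling steps for $2r$ and $2r+1$, where $Z$ is the set of reward-$0$ states satisfying $a\wedge\neg b$). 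Each of the $\Oh(\log r)$ steps costs $\Oh(\poly(\abs{\calM}))$ arithmetic operations. To pass from $=r$ to $\leq r$ one preprocesses by making every $b$-state absorbing, adding $a$ to its label and giving it reward~$1$; in the resulting chain $\Prob_s(a\Until_{\leq r} b)=\Prob_s(a\Until_{=r} b)$. Altogether the numbers $\Prob_s(a\Until_{\leq i} b)$ are obtained in time $\Oh(\poly(c\cdot\abs{\calM})\cdot\log r)$.

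The wrapper is then the one already sketched before the theorem: compute $q\coloneqq\Prob_s(a\Until b)$ by solving a linear system (polynomial time); if $p>q$, or $p=q$ and ${\rhd}={>}$, report $\val_\phi(s)=\infty$ by \cref{lemma:value-exists-until-upper}; if $p<q$, the same lemma gives $\val_\phi(s)\leq r\coloneqq kncm^n$ with $n=\abs{\Labels^{-1}(a)}$, $m$ the least common denominator of transition probabilities, and $k=\max\{-\lfloor\ln(q-p)\rfloor,1\}$; and if $p=q$ with ${\rhd}={\geq}$, then $\val_\phi(s)\leq nc$. The point to check is that $\log r=\Oh(n\log m+\log(nc)+\log k)$ and $\log k=\Oh(\log(q-p)^{-1})$ are bounded by $\poly(\abs{\calM})+\size{p}$, since $q$ is a rational whose bit-length is polynomial in $\abs{\calM}$. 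Because $i\mapsto\Prob_s(a\Until_{\leq i} b)$ is non-decreasing, one finds the least $i\leq r$ with $\Prob_s(a\Until_{\leq i} b)\mathrel{\rhd} p$ by ordinary binary search, each probe invoking the subroutine at cost $\Oh(\poly(c\cdot\abs{\calM})\cdot\log r)$; the total is $\Oh(\poly(c\cdot\abs{\calM})\cdot\log^2 r)=\Oh(\poly(c\cdot\abs{\calM}\cdot\size{p}))$, i.e.\ pseudo-polynomial.

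The main obstacle I expect is verifying the correctness of the doubling identities, in particular the careful handling of reward-$0$ states via the set~$Z$: the naive "split the path at reward exactly $r$" argument breaks when the splitting point is reached with reward already at the bound but not yet in a $b$-state, which is exactly what the $\States\setminus Z$ restriction in the recurrences is there to fix. A secondary routine-but-necessary check is that the bound from \cref{lemma:value-exists-until-upper} has logarithm polynomial in $\abs{\calM}$ and $\size{p}$, which is what makes the binary search (rather than the sequential search used for general MDPs) pay off.
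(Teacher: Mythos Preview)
Your proposal is correct and follows essentially the same approach as the paper: the paper likewise reduces to $\{0,1\}$-rewards, uses the three doubling identities to compute $\Prob_s(a\Until_{=r} b)$ along the binary representation of~$r$, converts $\leq r$ to $=r$ by making $b$-states absorbing with label~$a$ and reward~$1$, and then wraps this in a binary search up to the bound from \cref{lemma:value-exists-until-upper}, arriving at the same $\Oh(\poly(c\cdot\abs{\calM})\cdot\log^2 r)$ running time. Your remarks about the role of~$Z$ in the recurrences and about $\log r$ being polynomial in $\abs{\calM}$ and $\size{p}$ are exactly the two points the paper relies on implicitly.
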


\section{Conclusions}
\label{section:conclusion}

Although many researchers presented algorithms
and several sophisticated techniques for the \PCTL model checking
problem and to solve \PCTL and \PRCTL queries,
the class of quantile-based queries has not yet been addressed in
the model checking community.
In this paper, we presented algorithms for qualitative and 
quantitative quantile queries of the form
$\P_{\bowtie\mkern 1mu p}(a \Until_{\leq ?} b)$
and their duals
$\ExP_{\bowtie\mkern 1mu p}(a \Until_{\leq ?} b)$.
We established a polynomial algorithms for the qualitative case and
exponential algorithms for all but one of the quantitative cases.
Although the algorithms for the quantitative cases rely on a simple
search algorithm for the quantile, the crucial feature is the bound
we presented in
\cref{lemma:conv-min-prob-until-upper,lemma:conv-max-prob-until-upper}.
These bounds might be interesting also for other purposes.
There are several open problems to be studied in future work.
First, the precise complexity of quantitative quantile queries
is unknown and more efficient algorithms might exist, despite
the \NP-hardness shown in \cite{LaroussinieS05}.
Second, we concentrated here on reward-bounded until properties,
and by duality our results also apply to reward-bounded release properties.
But quantile queries can also be derived from other \PCTL-like formulas,
such as formulas reasoning about expected rewards,
\eg in combination with step bounds.

\paragraph*{Acknowledgments.}

We would like to thank Manuela Berg, Joachim Klein,
Sascha Kl\"uppelholz and Dominik Wojtczak for helpful
discussions and the anonymous reviewers for their
valuable remarks and suggestions.

\bibliographystyle{abbrv}
\bibliography{quantiles}

\ifapdx
\newpage
\appendix
\section{Proof of \cref{lemma:reduction-max-qual}}

In the following, we denote by~$D$ the set of states~$s$ of~$\calM$
such that
$s\models\AllP_{=0}(a\Until b)$ and assume without loss of generality
that each $b$-labelled state in~$\calM$ is absorbing.
Given a scheduler~$\sigma$ and sequence $x\in\States^*$, we also define
$\sigma[x]$ to be the scheduler such that $\sigma[x](y)=\sigma(xy)$
for all $y\in\States^*$. Finally, we write $\Finally a$ as an
abbreviation for the path formula $(\neg a\Until a)$.

Now let $s$ be a state of~$\calM$ such that
$p=\max_\sigma\Prob_s^\sigma(a\Until b)$.
Then it suffices to show that for all $r\in\bbN$ we have
$\calM,s\models\phi[r]$ if and only if
$\tilde\calM,s\models\tilde\phi[r]$.

$(\Rightarrow)$ Assume that $\calM,s\models\phi[r]$.
Hence, there exists a scheduler~$\sigma$ for~$\calM$ such that
$\Prob_s^\sigma(a\Until_{\leq r} b)=p$.
In particular, $\Prob_s^\sigma(a\Until b)=p$, which implies that
$\Prob_s^\sigma(a\Until(b\vee D))=1$.
We claim that
$\Prob_s^\sigma(a\Until_{\leq r}(b\vee D))
=\Prob_s^\sigma(a\Until(b\vee D))$.
Otherwise there would exist
$xt\in\States^*\cdot\States$ such that
$xt\in\{s\in\States\setminus D:s\models a\wedge\neg b\}^*$,
$\reward(xt)>r$ and $\Prob_s^\sigma(xt\cdot\States^\omega)>0$.
Since $t\not\in D$ and
$\Prob_s^\sigma(a\Until b)=\max_\tau\Prob_s^\tau(a\Until b)$,
we get that
$\Prob_t^{\sigma[x]}(a\Until b)>0$ and therefore also
$\Prob_s^\sigma(a\Until b)>\Prob_s^\sigma(a\Until_{\leq r} b)$,
a contradiction.
Finally, observe that $\sigma$~induces a scheduler~$\tilde\sigma$
for~$\tilde\calM$ such that
$\Prob_s^{\tilde\sigma}(a\Until_{\leq r} b)=
\Prob_s^{\tilde\sigma}(a\Until b)=1$, which proves
that $\tilde\calM,s\models\tilde\phi[r]$.

$(\Leftarrow)$ Assume that $\tilde\calM,s\models\tilde\phi[r]$.
Hence, there is a scheduler~$\tilde\sigma$ for~$\tilde\calM$ with
$\Prob_s^{\tilde\sigma}(a\Until_{\leq r} b)=1$.
This scheduler induces a scheduler~$\sigma$ for~$\calM$ such
that $\Prob_s^\sigma({a\Until_{\leq r}(b\vee D)})=1$.
%Since $\Prob_s^\sigma(a\Until b\wedge)$.
Note that in~$\calM$ we have
$p=\max_\tau\Prob_s^\tau(\neg\Finally D)$.
(In particular, the memoryless, randomised scheduler~$\tau^*$ that in every
state~$t$ uniformly chooses an action from all those actions that maximise the
probability of staying in $\States\setminus D$ has the property that
$\Prob_s^{\tau^*}\!(a\Until b)=\Prob_s^{\tau^*}\!(\neg\Finally D)
=\max_\tau\Prob_s^\tau(\neg\Finally D)$.) Since $\sigma$~is derived
from a scheduler for~$\tilde\calM$, this implies that, from any state~$t$,
$\sigma$~never chooses an action that does not maximise the probability of
staying in $\States\setminus D$. But any such scheduler maximises the
probability of never reaching~$D$, \ie
$\Prob_s^\sigma(\Finally D)=1-\max_\tau\Prob_s^\tau(\neg\Finally D)
=1-p$. Hence,
$\Prob_s^\sigma(a\Until_{\leq r} b)
= 1-\Prob_s^\sigma(a\Until_{\leq r} D)
\geq 1-\Prob_s^\sigma(\Finally D)
=p$,
which proves that $\calM,s\models\phi[r]$.
\qed

\fi

\end{document}